\documentclass{article}

\usepackage[a4paper,top=3cm,bottom=3cm,left=3cm,right=3cm,marginparwidth=2cm]{geometry}

\usepackage[english]{babel}
\usepackage[utf8]{inputenc}
\usepackage[T1]{fontenc}

\usepackage{mathtools,amsmath,amsthm,amsfonts,amssymb,bm}
\usepackage{graphicx}
\usepackage[colorlinks=true, allcolors=blue]{hyperref}
\usepackage{dsfont}
\usepackage{mathrsfs}
\usepackage{csquotes}
\usepackage[ruled,vlined]{algorithm2e}
\usepackage{subfig}
\usepackage{caption}
\usepackage{multirow}
\usepackage{array, booktabs}
\usepackage{enumitem}
\usepackage{cleveref}

\usepackage{ulem}
\usepackage[dvipsnames]{xcolor}
\definecolor{Red}{RGB}{220, 20, 60}
\usepackage[colorinlistoftodos,textwidth=2.5cm]{todonotes}

\theoremstyle{definition}

\theoremstyle{plain}
\newtheorem{theorem}{Theorem}[section]
\newtheorem{proposition}{Proposition}[section]
\newtheorem{corollary}{Corollary}[proposition]
\newtheorem{lemma}{Lemma}[section]
\newtheorem{remark}{Remark}[section]
\newtheorem{assumption}{Assumption}
\newtheorem{example}{Example}

\def\NN{{\mathbb N}} 
\def\PP{{\mathbb P}}
\def\EE{{\mathbb E}}
\def\RR{{\mathbb R}}

\def\II{{\mathds 1}}

\newcommand{\park}[1][k]{_{(#1)}}
\newcommand{\adaptedpar}[1]{\left(#1\right)}
\newcommand{\sign}{\operatorname{sign}}

\usepackage{natbib}
\bibliographystyle{jasa3}

\begin{document}

{
  \title{\bf Inference of multivariate exponential Hawkes processes with inhibition and application to neuronal activity}
  \author{Anna Bonnet\thanks{Université de Paris and Sorbonne Université, CNRS, Laboratoire de Probabilités, Statistique et Modélisation, F-75013 Paris, France}, Miguel Martinez Herrera\footnotemark[1], Maxime Sangnier\footnotemark[1]}
  \maketitle
}

\begin{abstract}
The multivariate Hawkes process is a past-dependent point process used to model the relationship of event occurrences between different phenomena.
Although the Hawkes process was originally introduced to describe excitation effects, which means that one event increases the chances of another occurring, there has been a growing interest in modelling the opposite effect, known as inhibition.
In this paper, we focus on how to infer the parameters of a multidimensional exponential Hawkes process with both excitation and inhibition effects. Our first result is to prove the identifiability of this model under a few sufficient assumptions. Then we propose a maximum likelihood approach to estimate the interaction functions, which is, to the best of our knowledge, the first exact inference procedure in the frequentist framework.
Our method includes a variable selection step in order to recover the support of interactions and therefore to infer the connectivity graph.
A benefit of our method is to provide an explicit computation of the log-likelihood, which enables in addition to perform a goodness-of-fit test for assessing the quality of estimations.
We compare our method to standard approaches, which were developed in the linear framework and are not specifically designed for handling inhibiting effects.
We show that the proposed estimator performs better on synthetic data than alternative approaches. We also illustrate the application of our procedure to a neuronal activity dataset, which highlights the presence of both exciting and inhibiting effects between neurons.

\textit{Keywords:} Non-linear Hawkes process, point process, maximum likelihood estimation, identifiability, support recovery, goodness-of-fit.

\end{abstract}

\section{Introduction}

A Hawkes process is a point process in which each point is commonly associated with event occurrences in time. In this past-dependent model, every event time impacts the probability that other events take place subsequently. These processes are characterised by the conditional intensity function, seen as an instantaneous measure of the probability of event occurrences. Since their introduction in \cite{Hawkes1971}, Hawkes processes have been applied in a wide variety of fields, for instance in seismology \citep{Ogata1988}, social media \citep{Rizoiu2017}, criminology \citep{Short2020} and neuroscience \citep{Reynaud2018}.

The multidimensional version of this model, referred to as the multivariate Hawkes process, describes the appearance of different types of events, the occurrences of which are influenced by all past events of all types. Each interaction between two types of events is encoded in kernel functions, also called interaction functions.
Originally this model takes only into account mutually exciting interactions - an event increases the chances of others occurring -
by assuming that all kernel functions are non-negative.
A specificity of self-exciting Hawkes processes is their branching structure, also known as cluster structure.
Introduced in \cite{Hawkes1974}, this parallel between Hawkes processes and branching theory has provided the first theoretical background for the self-exciting Hawkes model, in particular existence and expected number of points on a finite interval.
Estimation methods in the literature are vast including maximum likelihood estimators \citep{Ozaki1979,Guo2018} and method of moments \citep{DaFonseca2013}. 
Nonparametric approaches include an EM procedure introduced in \cite{Mohler2011}, estimations obtained via the solution of Wiener-Hopf equations \citep{Bacry2016} or by approximating the process through autoregressive models \citep{Kirchner2017} or through functions in reproducing kernel Hilbert spaces \citep{Yang2017}.

Although the self-exciting Hawkes process remains widely studied, there has been a growing interest in modeling the opposite effect, known as inhibition, in which the probability of observing an event is lowered by the occurrence of certain events. In practice, this amounts to considering negative kernel functions. In order to maintain the positivity of the intensity function, a non-linear operator is added to the expression which in turns entails the loss of the cluster representation. This model known as the non-linear Hawkes process was first presented in \cite{Bremaud1996}, where existence of such processes was proved via construction using bi-dimensional marked Poisson processes. Such approach of analysis has been used in the literature as in \cite{Chen2017}, where a coupling process is established and leveraged to obtain theoretical guarantees on cross-analysis covariance. Another approach is presented in \cite{Costa2020}, where renewal theory allows to obtain limit theorems for processes with bounded support kernel functions. Estimation methods focus mainly on nonparametric methods for general interactions and non-linear functions, as found in \cite{Bacry2016,Sulem2021}.

In the last years,
alternative models have been designed in order to take into account inhibiting effects in Hawkes processes.
An example is the neural Hawkes process, presented in \cite{Mei2017,Zuo2021}, which combines a multivariate Hawkes process and a recurrent neural network architecture.
In \cite{Duval2021}, a multiplicative model considers two sets of neuronal populations, one exciting and another inhibiting, and each intensity function is the product of two non-linear functions (one for each group).
Another model is presented in \cite{Short2020} and called self-limiting Hawkes process.
It includes the inhibition as a multiplicative term in front of a the traditional self-exciting intensity function. 

In this paper, we present a maximum likelihood estimation method for multivariate Hawkes processes with exponential kernel functions, that works for both exciting and inhibiting interactions, as modelled by \cite{Bremaud1996, Chen2017}.
This work builds upon the methodology for the univariate case, presented in \cite{bonnet2021}, by focusing in the intervals where the intensity function is positive.
We show that, under a weak assumption on the kernel functions, these intervals can be determined exactly.
We can then write for each dimension the integral of the intensity function, known in the literature as the compensator, which in turn provides an explicit expression of the log-likelihood.
This enables to build the corresponding maximum likelihood estimator and we complete our procedure with a variable selection step to recover the significant interactions within the whole process. This is of particular interest since it provides a graphical interpretation of the model and it can also be used a reduction dimension tool.
Our numerical procedure is implemented
in Python and freely available on GitHub.\footnote{\url{https://github.com/migmtz/multivariate-hawkes-inhibition}}
As a by-product of our method, the closed-form expression of the compensator also allows to assess goodness-of-fit via the Time Change Theorem and multiple testing.
We carry out a numerical study on simulated data and on a neuronal activity dataset \citep{Peterson2016,Radosevic2019}.
The performance of our approach is compared to estimations obtained via approximations from \cite{Bompaire2020} and \cite{Lemonnier2014}, and we show that our method not only achieves better estimations but is capable of identifying correctly the interaction network of the process.

To outline this paper, Section~\ref{sec:presentation} presents the multivariate Hawkes process framework and reviews the literature regarding inference of non-linear Hawkes processes.
In Section~\ref{sec:exponential}, we detail our procedure, including the maximum likelihood estimation, variable selection and goodness-of-fit test to assess the quality of the estimations. We also address the question of identifiability of the model, that we prove under a few sufficient conditions. 
The whole procedure is illustrated on simulated data in Section~\ref{sec:numerical} and applied to a neuronal activity dataset in Section~\ref{sec:neuron}. In Section \ref{sec:discussion} we discuss our contributions and its current limitations along with interesting perspectives for future work.

\section{The multivariate Hawkes process}\label{sec:presentation}
    \subsection{Definition}
    A multivariate Hawkes process $N = (N^1, N^2, \ldots, N^d)$ of dimension $d$ is defined by $d$ point processes on $\RR_+^*$, denoted $N^i\colon \mathcal B(\RR_+^*) \to \NN$, where $\mathcal{B}(\RR_+^*)$ is the Borel algebra on the set of positive numbers.

    Each process $N^i$ can be characterised by its associated event times $\left(T_k^i\right)_k$ and its conditional intensity function, defined for all \(t \ge 0\) by
    \begin{align}
    \lambda^i(t) &= \adaptedpar{\mu_i + \sum_{j=1}^{d}{\int_{0}^{t}{h_{ij}(t-s)\,\mathrm{d}N^j(s)}}}^+\nonumber\\ &= \adaptedpar{\mu_i + \sum_{j=1}^{d}\sum_{T_k^j \leq t}{h_{ij}(t-T^j_k)}}^+\label{eq:dimensional_c_intensity}\,,
    \end{align}
    where
    $x^+ = \max{(0,x)}$.
    Here, the quantity $\mu_i\in\RR_+^*$ is called the baseline intensity and each interaction or kernel function $h_{ij}\colon \RR_+^* \to \RR$ represents the influence of the process $N^j$ on the process $N^i$ and $T^j_k$ corresponds to the $k$-th event time of process $N^j$.

    \begin{remark}
      The positive-part function in Equation~\eqref{eq:dimensional_c_intensity} is needed to ensure the non-negativity of \(\lambda^i\) in the presence of strong inhibiting effects, that is when some interaction functions \(h_{ij}\) are sufficiently negative compared to positive contributions.
      Concretely, the positive part does not affect the intensity function if inhibiting effects are in minority compared to the positive contributions (exciting effects or baseline intensities).
    \end{remark}
    \begin{remark}
        Equation~\eqref{eq:dimensional_c_intensity} may question the reader for two reasons.
        First, it is the cadlag definition of a the conditional intensity of a Hawkes process.
        It's our choice to prefer it to the caglad version but all the results presented here can be written in this setting.
        Second, it is considered that the history is empty for \(t < 0\).
        It is a common choice for statistical inference (a finite amount a times is observed) while the infinite history is preferred for a probabilistic analysis based on a stationary assumption.
    \end{remark}

    For each process \(N^i\) and for all $t\geq 0$, let us note $N^i(t) = \sum_{k\geq 1}{\II_{T_k^i \leq t}}$ the measure of \((0, t]\) and the compensator \[\Lambda^i(t) = \int_{0}^{t}{\lambda^i(u)\,\mathrm{d}u}\,.\]

    The process $N$ can be seen as a point process on $\RR_+^*$, where for any $B\in\mathcal{B}(\RR^*_+)$, $N(B) = \sum_{i=1}^{d}{N^i(B)}$.
    Similarly to a univariate process, \(N\) can be characterised by its conditional intensity \(\lambda\) (also called total intensity):
    \begin{equation}\label{eq:total_intensity}
        \lambda(t) = \sum_{i=1}^{d}{\lambda^i(t)}\,,
    \end{equation}
    and by its compensator
    \[\Lambda(t)=\int_{0}^{t}{\lambda(u)\,\mathrm{d}u} =\sum_{i=1}^{d}{\Lambda^i(t)}\,.\]

    From this point of view, the process \(N\) is associated to event times $\left(T\park\right)_k = \left(T_{u_k}^{m_k}\right)_k$, corresponding to the ordered sequence composed of $\bigcup_{i=1}^{d}\{T_k^i \mid k>0\}$,
    and we may define, for every \(t\ge 0\), $N(t) = \sum_{k\geq 1}{\II_{T\park \leq t}} = \sum_{i=1}^{d}{N^i(t)}$.
    Here, $(u_k)_k$ is the random ordering sequence and $(m_k)_k$ the sequence of marks that make it possible to identify to which dimension each time corresponds.
    These marks can be written as \[m_k = \sum_{j=1}^{d}{j\II_{N^j\left(\{T\park\}\right) = 1}}\,.\]
    
    \begin{remark}
    A more detailed introduction of multivariate point processes via the concept of marked point processes can be found in \cite[Chapter 6.4]{DaleyV1}
    \end{remark}

    As the aim of this paper is to describe a practical methodology for estimating the conditional intensities \(\lambda_1\), \dots, \(\lambda_d\) via maximising the log-likelihood, the latter quantity has to be made explicit.
    Let \(t \ge 0\) and assume that event times
    $\left\{ T_k^i : 1 \le k \le N_i(t), 1 \le i \le d \right\}$
    are observed in the interval \((0, t]\).
    Then, given a parametric model $\mathcal{P} = \{(\lambda_{\theta_1}^1, \dots, \lambda_{\theta_d}^d)\colon \theta = (\theta_1, \dots, \theta_d) \in \Theta_1 \times \dots \times \Theta_d\}$ (and associated compensators \(\Lambda_{\theta_1}^1, \dots, \Lambda_{\theta_d}^d\)) for conditional intensity functions \(\lambda^1, \dots, \lambda^d\), for every \(\theta \in \Theta\), the log-likelihood $\ell_t(\theta)$ reads \cite[Proposition 7.3.III.]{DaleyV1}
    \begin{equation*}
      \ell_t(\theta) = \sum_{i=1}^{d}{\ell_t^i(\theta_i)}\,,\nonumber 
    \end{equation*}
    with
    \begin{equation}\label{eq:general_log_likelihood}
      \ell_t^{i}(\theta_i) = \sum_{k=1}^{N^i(t)}{\log{\lambda^i_{\theta_i}(T_k^{i-})}} - \Lambda_{\theta_i}^i(t)\,,
    \end{equation}
    where $\lambda^i_{\theta_i}(T_k^{i-}) = \lim_{t\to T_k^{i-}}{\lambda_{\theta_i}^i(t)}$ and with convention $\log{(x)} = -\infty$ for $x\leq 0$.

    The heart of the problem in deriving a maximum likelihood estimator for the conditional intensities \(\lambda^i\) is being able to evaluate exactly the compensator values \(\Lambda_\theta^i(t)\) for every possible \(\theta \in \Theta\), which requires to determine when the conditional intensities \(\lambda^i\) are non-zero.
    The forthcoming sections clear this point up.

  \subsection{Related work}\label{sec:literature}
    Estimation methods for Hawkes processes have focused mainly on self-exciting interactions (by assuming $h_{ij} \geq 0$). In \cite{Ozaki1979}, the author presents the maximum likelihood estimation method for univariate processes with exponential kernel \(h(t) = \alpha \mathrm{e}^{-\beta t}\) (\(\alpha>0\), \(\beta>0\)), the same method being established in \cite{Mishra2016} for the power law kernel function \(h(t) = \frac{\alpha\beta} {(1 + \beta t)^{1 + \gamma}}\) (\(\alpha>0\), \(\beta>0\), \(\gamma > 0\)). In \cite{ChenHawkes} the maximum likelihood method is presented for the multivariate version with exponential kernel, while \cite{Bompaire2020} proposed an inference method based on optimising a least-squares criterion. Other methods in the parametric setting include spectral analysis \citep{Adamopoulos1976}, EM algorithm \citep{Veen2006} and method of moments \citep{DaFonseca2013}.

    Estimators of the interaction functions \(h_{ij}\) are also presented in a nonparametric setting. For instance, \cite{Yang2017} proposed to estimate \(h_{ij}\) in a reproducing kernel Hilbert space.
    \cite{Reynaud2014} proposed a decomposition of the interaction functions \(h_{ij}\) on a histogram basis with bounded support, the estimation of which are obtained by minimising a least-squares contrast.
    Hawkes processes with excitation have also been studied in a Bayesian context, with likelihood-based approaches, as in \cite{Rasmussen2013} for the univariate case and in \cite{Donnet2020} for multivariate processes.

    Although inhibiting effects in Hawkes processes were first mentioned in \cite{Bremaud1996}, they have only met a growing interest in the last decade.
    Concerning inference, most of the known methods are not designed for handling the inhibiting case: nevertheless some are able in practice to estimate negative interactions by minimising a least-squares criterion, but without guaranteeing that the estimated intensity functions remain non-negative \citep{Reynaud2014, Bompaire2020}.
    A similar approach is proposed in \cite{Lemonnier2014} for maximum likelihood estimation, where the compensator \(\Lambda^i\) is approximated by integrating
    the conditional intensity $\lambda^i$ without the positive part function (see Equation~\eqref{eq:dimensional_c_intensity}).
    Obviously, these methods should perform well when the intensities remain mostly positive, but it is unclear how they will adapt to scenarios when the intensities are frequently equal to zero due to inhibiting terms.
    A similar remark is mentioned in \cite{Bacry2016} concerning their estimation method, that can provide negative estimations of the interactions only if there is a negligible chance of the intensities to be null.
    
    Inference procedures that are specifically dedicated to Hawkes processes with inhibition are scarcer in the literature. \cite{Sulem2021} presents various results for non-linear Hawkes processes including inhibition effects: existence, stability and Bayesian estimation for kernel functions with bounded support.
    \cite{Deutsch2022} presents choices of priors for Bayesian estimation based on a new reparametrisation of the process.

    Lastly, \cite{bonnet2021} presents a maximum likelihood estimation adapted to the univariate Hawkes process with inhibition and monotone kernel functions.
    The decisive contribution of this work is to give, for an exponential kernel \(h(t) = \alpha \mathrm{e}^{-\beta t}\) (\(\alpha \in \RR\), \(\beta>0\)), a closed-form expression of restart times, which are defined as the instants at which the single conditional intensity becomes non-zero.
    This makes possible to compute explicitly the compensator and then the log-likelihood.
    Yet, this study is limited to the univariate case.
    It has to be noted that a formalism similar to \cite{bonnet2021} for multivariate Hawkes processes is mentioned in \cite{Deutsch2022},
    but used neither for maximum likelihood estimation, nor for goodness-of-fit tests.

    This paper goes a step forward in estimation of multivariate Hawkes processes with inhibition, by providing the first exact maximum likelihood method for exponential interactions $h_{ij}(t) = \alpha_{ij}\mathrm{e}^{-\beta_{ij}t}$, combined with a variable selection procedure.
    As it will be explained in the next section, the proposed approach also enables to perform standard goodness-of-fit tests.

    \section{Estimation and goodness-of-fit}\label{sec:exponential}

    \subsection{Introductive example}
    Before motivating and explaining the estimation procedure proposed in this paper, we present an example of multivariate Hawkes process.
    Figure~\ref{fig:2_dimension_example} depicts in red conditional intensities \(\lambda^1\) and \(\lambda^2\) for a realisation of a \(2\)-dimensional Hawkes process (see the forthcoming section for the definition of underlying intensities). The existence of such a process (along with its stationarity) is ensured by controlling the spectral radius $\rho(S^+) < 1$ of the matrix $S^+ = (\|h_{ij}^+\|_1)_{ij}$ \citep{Deutsch2022}. Similar results with slightly different conditions can be found in \cite{Bremaud1996} and \cite{Sulem2021}.
    The simulation has been carried out with baselines \(\mu_1 = 0.5\) and \(\mu_2 = 1.0\),
    and with exponential kernels \(h_{ij}(t) = \alpha_{ij} \mathrm{e}^{-\beta_{ij}t}\) parameterised by:
    \[
      \begin{pmatrix}
      \alpha_{11} & \alpha_{12}\\
      \alpha_{21} & \alpha_{22}
      \end{pmatrix}=
      \begin{pmatrix}
      -1.9 & 3.0\\
      0.9 & -0.7
      \end{pmatrix},
      \qquad
      \text{and}
      \qquad
      \begin{pmatrix}
      \beta_{11} & \beta_{12}\\
      \beta_{21} & \beta_{22}
      \end{pmatrix} =
      \begin{pmatrix}
      2.0 & 20.0\\
      3.0 & 2.0
      \end{pmatrix}\,.
    \]
    These kernels have been chosen such that both processes are self-inhibiting ($\alpha_{11}, \alpha_{22} < 0$) but inter-exciting ($\alpha_{12}, \alpha_{21} > 0$).

    \begin{figure*}[!ht]
    \centering
    \includegraphics[width=.75\linewidth]{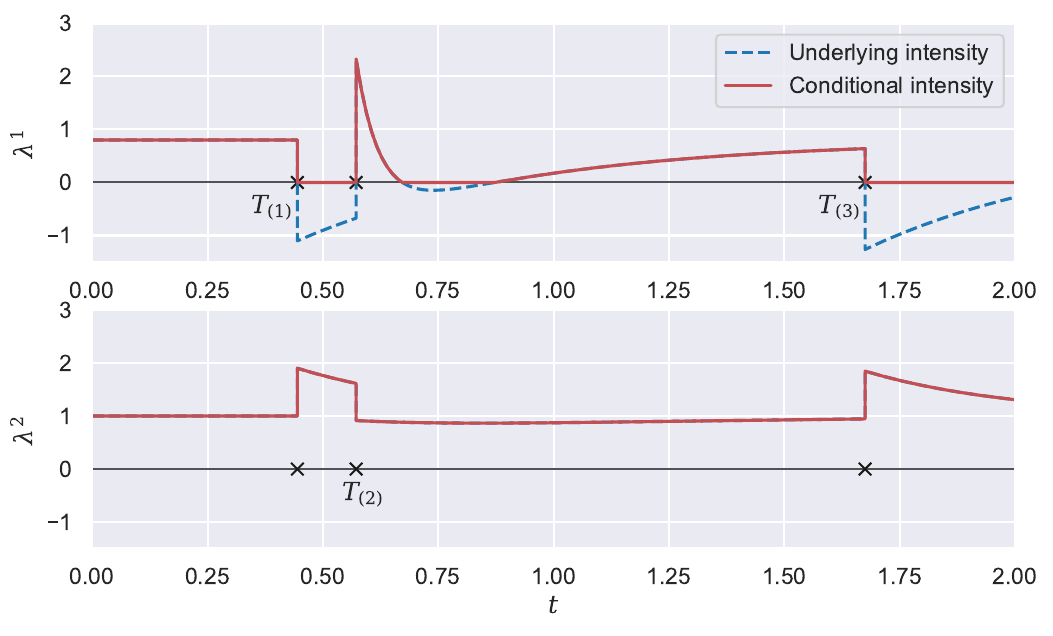}
    \caption{Simulation of a \(2\)-dimensional Hawkes process. Each cross corresponds to an event time, and each $T\park$ is shown in its corresponding process.}
    \label{fig:2_dimension_example}
    \end{figure*}

    The goal of this paper is to establish a parametric estimation method, via maximum likelihood estimation, that is able to handle both excitation and inhibition frameworks in the multivariate case.
    For this purpose, it is necessary to compute explicitly the log-likelihood \(\ell_t(\theta)\) (see Equation~\eqref{eq:general_log_likelihood}) and in particular to evaluate the compensator \(\Lambda_\theta^i\), expressed as an integral of \(\lambda_\theta^i\).
    For the latter, the main challenge is to determine when conditional intensities \(\lambda^i\) are non-zero, that is on which intervals they are tailored by the exponential interaction functions and not by the positive-part operator.

    In \cite{bonnet2021}, the authors solved this challenge for univariate processes by remarking that the conditional intensity is monotone between two event times.
    Figure~\ref{fig:2_dimension_example} illustrates that this is not necessarily true for multivariate processes (here, between \(T_{(2)}\) and \(T_{(3)}\)).
    This constitutes the major difficulty we have to cope with.

  \subsection{Underlying intensity and restart times in the multivariate setting}

    From now on, let us focus on the exponential model \citep{Hawkes1971}, where each interaction function $h_{ij}$ is then defined as
    \[h_{ij}(t) = \alpha_{ij}\mathrm{e}^{-\beta_{ij}t}\,,\]
    with $\alpha_{ij}\in\RR$ and $\beta_{ij}\in\RR_+^*$ for $i,j\in \{1,\ldots, d\}$.
    For each $i\in\{1,\ldots, d\}$, the underlying intensity function $\lambda^{i\star}$ is defined as in \cite{bonnet2021} for the univariate case:
    \[\lambda^{i\star}(t) = \mu_i + \sum_{j=1}^{d}{\int_{0}^{t}{h_{ij}(t-s)\,\mathrm{d}N^j(s)}}\,.\]
    This quantity coincides with the conditional intensity \(\lambda^i\) when it is non-zero, and is non-positive otherwise.
    In particular, we can observe that \(\lambda^i (t) = \left( \lambda^{i\star}(t) \right)^+\) (see Figure~\ref{fig:2_dimension_example}).

    As explained in the previous section, the main difficulty of the multivariate exponential setting is the non-monotony of conditional intensities \(\lambda^i\) between two event times.
    Determining intervals where \(\lambda^i\) is non-zero (that is when \(\lambda^{i\star}\) is positive) would require to numerically find the roots of a high-degree polynomial, which is expensive and inexact.
    To alleviate this problem, we introduce Assumption~\ref{assu:beta}.

    \begin{assumption}\label{assu:beta}
    For each $i\in\{1,\ldots, d\}$, there exists $\beta_i\in\RR_+^*$ such that $\beta_{ij} = \beta_i$ for all $j\in\{1,\ldots, d\}$.
    \end{assumption}

    \begin{remark} This model with constant recovery rates $\beta_i$ has been studied before in the works of \cite{Ogata1981} in the self-exciting version of a 2-dimensional Hawkes process. Intuitively, this assumption considers the situation where the rate of ``dissipation'' of any internal or external effect is dependent only on the receiving phenomenon. For instance, for neuronal interactions, each activation from neuron $j$ will have an impact on a connected neuron $i$ dependent on both neurons $(\alpha_{ij})_{ij}$ but the ``recovery'' time can be assumed to depend only on the connected neuron $i$ \((\beta_{i})_i\).
    \end{remark}

    As we will see in Lemma~\ref{lemma:restart_times}, this assumption enables to recover the monotony of the conditional intensities between two times.
    It remains now to determine when the underlying intensity \(\lambda^{i\star}\) is negative.
    To do so, we define the restart times in the multivariate framework, to be,
    for any \(k\) and \(i\):
    \[T\park^{i\star} = \min\adaptedpar{\inf{\{t\geq T\park \colon \lambda^{i\star}(t) \geq 0\}}, T\park[k+1]}.\]
    
    {\begin{figure*}[!ht]
    \centering
    \includegraphics[width=0.8\linewidth]{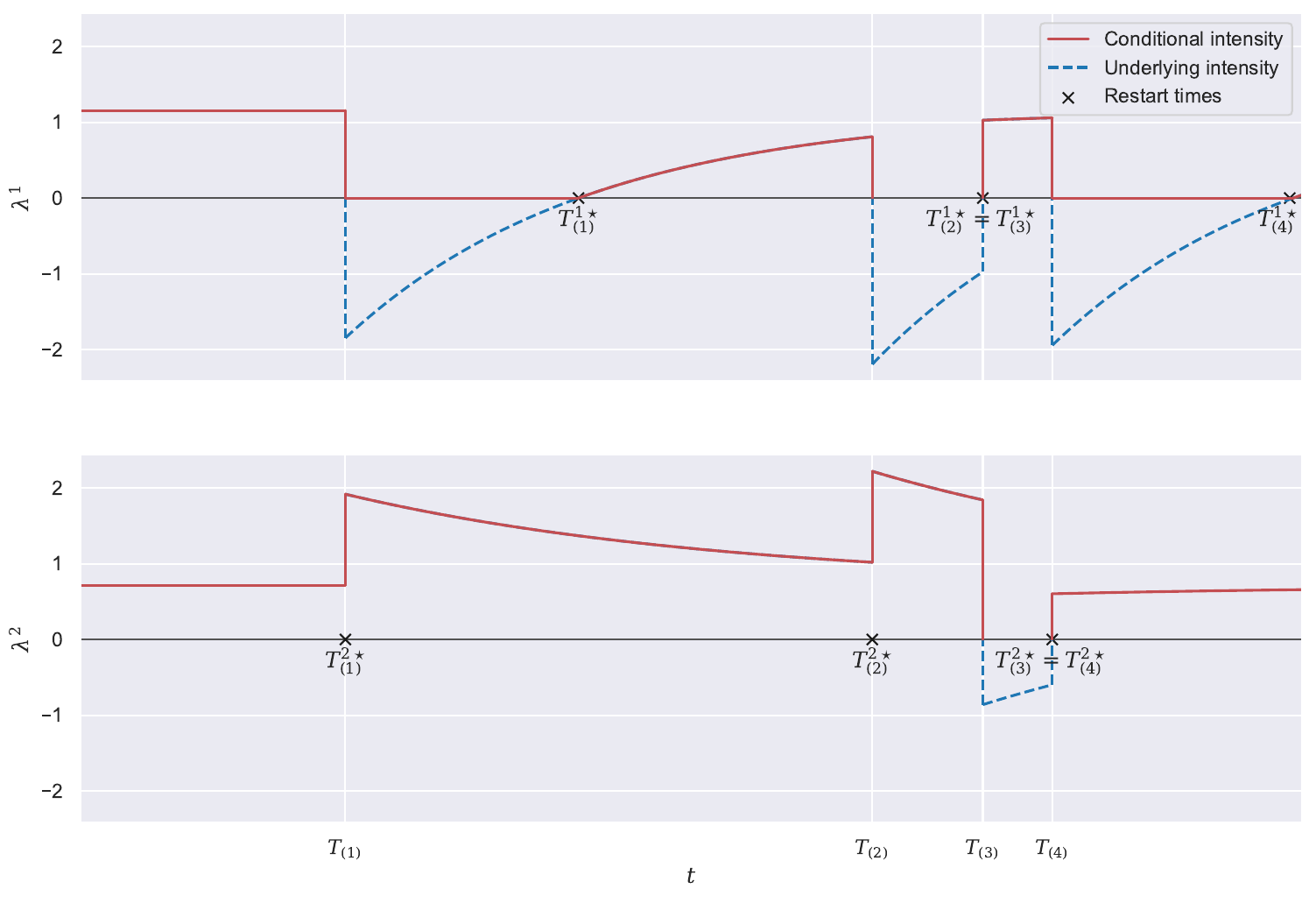}
    \caption{Illustration of restart times $(T\park^{\star})_{1\leq k \leq 4}$ for each subprocess of a 2-dimensional process associated with event times $(T\park[k])_{1\leq k \leq 4}$.}
    \label{fig:restarttimes}
    \end{figure*}}
    
     As exemplified in Figure~\ref{fig:restarttimes}, the restart time $T\park^{i\star}$ (associated to the sub-process $i$) corresponds to the first instant between $T\park$ and $T\park[k+1]$ from which $\lambda^{i\star}(t)$ becomes non-negative (or $T\park[k+1]$ if this instant does not exist).
    Intuitively, it means that $\lambda^{i}(t) = \lambda^{i\star}(t)$ on $(T\park^{i\star}, T\park[k+1])$ and $\lambda^{i}(t) = 0$ elsewhere on  $(T\park$, $T\park[k+1])$.
    This is formalised in Lemma~\ref{lemma:restart_times}.
    In particular, it appears that the restart time $T\park^{i\star}$ can be expressed as a function of $$T\park + \beta_i^{-1}\log{\adaptedpar{\frac{\mu_i - \lambda^{i\star}(T\park)}{\mu_i}}}\,,$$
      which is the single root to the equation $\mu_i + (\lambda^{i\star}(T\park) - \mu_i)\mathrm{e}^{-\beta_i(t-T\park)} = 0$ on the interval $[T\park, +\infty)$ when $\lambda^{i\star}(T\park) < 0$ (see top panel of Figure~\ref{fig:restarttimes}).
      Then, Proposition~\ref{prop:compensator} gives an explicit formulation of the compensator of each subprocessi \(N^i\), which is needed to compute the log-likelihood (see Equation~\eqref{eq:general_log_likelihood}).
      The proofs of these two results are presented respectively in Appendix~\ref{appendix:proof_lemma} and Appendix~\ref{appendix:proof_prop_compensator}.

    \begin{lemma}\label{lemma:restart_times}
    If Assumption~\ref{assu:beta} is granted, then for each $i\in\{1,\ldots, d\}$ and any $k\geq1$: 

    \begin{equation}\label{eq:restart_times_exponential}
        T\park^{i\star} = \min{\adaptedpar{t^\star_k, T\park[k+1]}}\,.
    \end{equation}
    where
    \[t^\star_k = \adaptedpar{T\park + \beta_i^{-1}\log{\adaptedpar{\frac{\mu_i - \lambda^{i\star}(T\park)}{\mu_i}}}\II_{\{\lambda^{i\star}(T\park) < 0\}}}\]

    Furthermore,
    for all \(t \in (T\park, T\park[k+1])\),
    \[
      \lambda^i(t) =
      \begin{cases}
        \lambda^{i\star}(t) > 0 & \text{if } t \in (T\park^{i\star}, T\park[k+1])\\
        0 & \text{otherwise}.
      \end{cases}
    \]
    \end{lemma}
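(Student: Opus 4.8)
The plan is to exploit Assumption~\ref{assu:beta} to reduce $\lambda^{i\star}$ to a single decaying exponential on each inter-event interval, from which both the monotonicity and the explicit restart time follow. First I would fix $i$ and $k$ and work on the open interval $(T\park, T\park[k+1])$, on which no new event occurs. Writing $\lambda^{i\star}(t) = \mu_i + \sum_{j=1}^d \sum_{T_l^j \leq t} \alpha_{ij}\mathrm{e}^{-\beta_i(t - T_l^j)}$ and using that $\beta_{ij} = \beta_i$ does not depend on $j$, the common decay rate lets me factor $\mathrm{e}^{-\beta_i t}$ out of the double sum, whose remaining coefficient is constant on the interval. Evaluating at $t = T\park$ to identify that constant yields the key identity
\[
  \lambda^{i\star}(t) = \mu_i + \bigl(\lambda^{i\star}(T\park) - \mu_i\bigr)\mathrm{e}^{-\beta_i(t - T\park)}, \qquad t \in (T\park, T\park[k+1]).
\]
This is the step that genuinely uses the assumption: a sum of exponentials with distinct rates need not be monotone, whereas this single-exponential form is.

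From this identity the monotonicity is immediate. Since $\mu_i > 0$, the sign of the coefficient $\lambda^{i\star}(T\park) - \mu_i$ is fixed, so $\lambda^{i\star}$ is monotone on the interval and converges to $\mu_i > 0$: decreasing and bounded below by $\mu_i$ when $\lambda^{i\star}(T\park) > \mu_i$, and increasing toward $\mu_i$ when $\lambda^{i\star}(T\park) < \mu_i$. I would then split according to the sign of $\lambda^{i\star}(T\park)$. If $\lambda^{i\star}(T\park) \geq 0$, monotonicity together with the limit $\mu_i > 0$ shows $\lambda^{i\star}(t) > 0$ for every $t$ in the open interval, so the indicator in Equation~\eqref{eq:restart_times_exponential} vanishes and the formula collapses to $\min(T\park, T\park[k+1]) = T\park$; the piecewise description of $\lambda^i = (\lambda^{i\star})^+$ then holds with $(T\park^{i\star}, T\park[k+1]) = (T\park, T\park[k+1])$.

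If instead $\lambda^{i\star}(T\park) < 0$, then $\lambda^{i\star}$ is strictly increasing from a negative value toward $\mu_i > 0$, hence has a unique zero on $[T\park, +\infty)$. Solving $\mu_i + (\lambda^{i\star}(T\park) - \mu_i)\mathrm{e}^{-\beta_i(t - T\park)} = 0$ gives exactly $t = T\park + \beta_i^{-1}\log\bigl((\mu_i - \lambda^{i\star}(T\park))/\mu_i\bigr)$, which is well-defined and exceeds $T\park$ because $\mu_i - \lambda^{i\star}(T\park) > \mu_i > 0$. Comparing this crossing time with $T\park[k+1]$ produces the $\min$ in Equation~\eqref{eq:restart_times_exponential}: either the intensity crosses zero before the next event, so the restart time is the crossing time with $\lambda^{i\star} < 0$ to its left and $\lambda^{i\star} > 0$ to its right, or it stays negative over the whole interval, so the restart time is $T\park[k+1]$ and $\lambda^i \equiv 0$ there. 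Applying $(\cdot)^+$ in both cases gives the stated piecewise form of $\lambda^i$.

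The main obstacle is really the first step: recognizing that the constant-$\beta_i$ assumption collapses the otherwise unwieldy sum of heterogeneous exponentials into one decaying term and thereby restores monotonicity; once the recursive identity is in hand, the sign analysis and the root computation are routine. A minor point to treat carefully is the convention that $\lambda^{i\star}(T\park)$ denotes the post-jump value (the sum over $T_l^j \leq T\park$ includes the event at $T\park$), so that it is the correct initial condition for the interval $(T\park, T\park[k+1])$.
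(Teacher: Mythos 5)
Your proof is correct and takes essentially the same route as the paper: you first establish the single-exponential form $\lambda^{i\star}(t) = \mu_i + \bigl(\lambda^{i\star}(T_{(k)}) - \mu_i\bigr)\mathrm{e}^{-\beta_i(t-T_{(k)})}$ on each inter-event interval (the paper derives it by solving the ODE $(\lambda^{i\star})'(t) = -\beta_i(\lambda^{i\star}(t)-\mu_i)$, you by factoring out the common exponential --- an immaterial difference), and then carry out the same case analysis on the sign of $\lambda^{i\star}(T_{(k)})$, solving for the unique zero and comparing it with $T_{(k+1)}$. Your closing remark that $\lambda^{i\star}(T_{(k)})$ must be read as the post-jump value is a correct point that the paper leaves implicit.
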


    \begin{proposition}\label{prop:compensator}[Compensator for multivariate exponential kernels]
    Let us suppose that Assumption~\ref{assu:beta} is granted. For each $i\in\{1,\ldots, d\}$ the compensator $\Lambda^i$ of the process $N^i$ reads, $\forall t\geq 0$:
    \begin{equation}\label{eq:compensator_exponential}
      \Lambda^i(t) =
      \begin{cases}
        \mu_i t & \text{if } t < T\park[1] \\
        \mu_i T\park[1] + \sum_{k=1}^{N(t)} J_k & \text{if } t \geq T\park[1] \,,
      \end{cases}
    \end{equation}
    where for all integer \(k \in \{1, \dots, N(t)\}\):
    \begin{equation*}
      J_k =
      \mu_i \left[\min(t, T\park[k+1] ) - T\park^{i\star}\right] + \beta_i^{-1} \left(\lambda^{i\star}(T\park) - \mu_i \right)
      \left[ \mathrm{e}^{-\beta_i(T\park^{i\star} - T\park)} - \mathrm{e}^{-\beta_i(\min(t, T\park[k+1])- T\park)} \right]\,.
    \end{equation*}
    \end{proposition}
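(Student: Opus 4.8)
The plan is to compute $\Lambda^i(t) = \int_0^t \lambda^i(u)\,\mathrm{d}u$ by decomposing the integration domain along the ordered event times $\adaptedpar{T\park}_k$ and invoking Lemma~\ref{lemma:restart_times} on each inter-event interval. Before the first event, i.e. on $(0, T\park[1])$, no jump has occurred, so $\lambda^{i\star}(u) = \mu_i > 0$ and hence $\lambda^i(u) = \mu_i$; this immediately gives $\Lambda^i(t) = \mu_i t$ for $t < T\park[1]$ and the boundary term $\mu_i T\park[1]$ otherwise. For $t \geq T\park[1]$ I would write
\[
  \Lambda^i(t) = \mu_i T\park[1] + \sum_{k=1}^{N(t)} \int_{T\park}^{\min(t, T\park[k+1])} \lambda^i(u)\,\mathrm{d}u,
\]
so that it only remains to identify each summand with $J_k$.

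The key step is to make the integrand explicit. Under Assumption~\ref{assu:beta} every kernel acting on dimension $i$ decays at the same rate $\beta_i$, so on an interval $(T\park, T\park[k+1])$ containing no event the underlying intensity factorises as
\[
  \lambda^{i\star}(u) = \mu_i + \adaptedpar{\lambda^{i\star}(T\park) - \mu_i}\mathrm{e}^{-\beta_i(u - T\park)},
\]
which is exactly the solution of the equation stated after Lemma~\ref{lemma:restart_times} and the source of the monotonicity that the lemma exploits. By Lemma~\ref{lemma:restart_times}, on this interval $\lambda^i$ coincides with $\lambda^{i\star}$ on $(T\park^{i\star}, T\park[k+1])$ and vanishes on $(T\park, T\park^{i\star})$; consequently
\[
  \int_{T\park}^{\min(t, T\park[k+1])} \lambda^i(u)\,\mathrm{d}u = \int_{T\park^{i\star}}^{\min(t, T\park[k+1])} \adaptedpar{\mu_i + \adaptedpar{\lambda^{i\star}(T\park) - \mu_i}\mathrm{e}^{-\beta_i(u - T\park)}}\,\mathrm{d}u.
\]
Integrating the constant term gives $\mu_i\big[\min(t, T\park[k+1]) - T\park^{i\star}\big]$, and integrating the exponential gives $\beta_i^{-1}\adaptedpar{\lambda^{i\star}(T\park) - \mu_i}\big[\mathrm{e}^{-\beta_i(T\park^{i\star} - T\park)} - \mathrm{e}^{-\beta_i(\min(t, T\park[k+1]) - T\park)}\big]$, whose sum is precisely $J_k$.

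Assembling these pieces over $k = 1, \dots, N(t)$ and adding the boundary term $\mu_i T\park[1]$ produces Equation~\eqref{eq:compensator_exponential}. Two cases deserve extra care. When $\lambda^{i\star}(T\park) \geq 0$ the indicator in the restart-time formula vanishes, so $T\park^{i\star} = T\park$ and the interval is integrated in full, consistently with $J_k$. The genuinely delicate point is the final, truncated interval $k = N(t)$: there the upper limit is $t$ rather than $T\park[k+1]$, and one must ensure the contribution vanishes when $t$ precedes the restart time (so that $\lambda^i \equiv 0$ on $(T\park[N(t)], t)$). I expect this bookkeeping --- reconciling the written $\min(t, T\park[k+1])$ limits with the convention that the summand is null whenever $\min(t, T\park[k+1]) \leq T\park^{i\star}$ --- to be the main obstacle, the remaining computations being the elementary integrals above.
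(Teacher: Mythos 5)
Your proposal is correct and follows essentially the same route as the paper: split the integral over the inter-event intervals, use Lemma~\ref{lemma:restart_times} to replace $\lambda^i$ by $\lambda^{i\star}$ and shift the lower limit to the restart time $T\park^{i\star}$, then integrate the closed form of $\lambda^{i\star}$ given by Lemma~\ref{lemma:intensity_interval_form}. Your remark about the truncated last interval is well taken: the paper's proof keeps the indicator $\II_{u\le t}$ inside the integral (and the implementation in Algorithm~\ref{alg:likelihood} adds an explicit indicator $\II_{\{t > T\park[N(t)]^{i\star}\}}$ for the final term), which is exactly the convention you identify as needed for the summand to vanish when $t$ precedes the restart time.
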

    
  \subsection{Identifiability and likelihood computation}\label{sec:likelihood}
    As already mentioned in Section~\ref{sec:presentation}, let \(t \ge 0\) and assume that event times
    $\left\{ T_k^i : 1 \le k \le N_i(t),\right.$ $\left. 1 \le i \le d \right\}$ are observed in the interval \((0, t]\).
    We consider the parametric exponential model for a multivariate Hawkes process of dimension $d$, defined by
    \[
      \mathcal P = \left\{(\lambda^1, \dots, \lambda^d)\colon \lambda^1 \in \mathcal P^1, \dots, \lambda^d \in \mathcal P^d \right\} \,,
    \]
    where for each \(i \in \{1, \dots, d\}\), \(\mathcal P^i\) is the exponential parametric model for the process \(N^i\):
\[
      \mathcal{P}^i
      = \left\{	\lambda_{\theta_i}^i = \left(\mu_i + \sum_{j=1}^{d}{\int_{-\infty}^{t}{\alpha_{ij} \mathrm{e}^{-\beta_i(t-s)}\,\mathrm{d}N^j(s)}}\right)^+ \colon \theta_i = (\mu_i, \alpha_{i1}, \ldots, \alpha_{id}, \beta_i) \in\Theta \,\right\}\,,
    \]
    where $\Theta = \RR_+^\star\times\RR^d\times\RR_+^\star$.
    For a candidate set of intensities \((\lambda_{\theta_1}^1, \dots, \lambda_{\theta_d}^d)\), the underlying intensity functions are denoted \(\lambda_{\theta_i}^{i\star}\) (\(i \in \{1, \dots, d\}\)), the compensators $\Lambda_{\theta_i}^i$
    and the restart times
    $(T^{i\star}_{\theta_i, (k)})_{k}$.
    Now, given a realisation $\left(T\park \right)_{k>0}$ of a multivariate exponential Hawkes process,
    Theorem~\ref{theorem:identifiability} establishes the correspondence between the conditional intensities and the parameters.
        
\begin{theorem}[Identifiability]\label{theorem:identifiability}
    Let $N=(N^1, \ldots, N^d)$ be a multivariate Hawkes process defined by a set of intensity functions $(\lambda^1_{\theta_1}, \dots, \lambda^d_{\theta_d}) \in \mathcal P$, for some $\theta = (\theta_1, \dots, \theta_d) \in \Theta^d$.
    Let also $\left(T\park \right)_{k>0}$ be a realisation of $N$ and $\mathcal{F}_t$ be the corresponding filtration.
    
    Let us assume that a.s. for every $(i, j) \in \{1, \dots, d\}^2$, $i\neq j$,
    there exist an event time \(\tau\) from process \(N^j\),
    and an event time \(\tau_+ > \tau\) from process \(N^i\), such that:
    \begin{enumerate}
        \item \label{hyp:ii} $\lambda_{\theta_i}^i(\tau^-) > 0$;
        \item \label{hyp:iii} there are only events of process $N^j$ in the interval $[\tau, \tau_+)$.
    \end{enumerate} 
    
    Then, for any $\theta' \in \Theta^d$,
    \[
      \forall i \in \{1, \dots, d\},~
      \lambda_{\theta_i}^i(t ~|~ \mathcal{F}_t) = \lambda_{\theta_i'}^i(t ~| \mathcal{F}_t) \text{ a.e.}
      \qquad
      \iff
      \qquad
      \theta = \theta' \,.
    \]
\end{theorem}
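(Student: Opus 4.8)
The reverse implication is immediate, since identical parameters produce identical intensity functions on any realisation, and all the content lies in the direct implication. I would prove it by reconstructing each block $\theta_i = (\mu_i, \alpha_{i1}, \dots, \alpha_{id}, \beta_i)$ from the sample path of $\lambda^i_{\theta_i}$. The first step is a dimensional reduction: each $\lambda^i_{\theta_i}$ depends on $\theta$ only through $\theta_i$, so the hypothesis $\lambda^i_{\theta_i} = \lambda^i_{\theta'_i}$ a.e.\ can be treated one index $i$ at a time, and it suffices to show it forces $\theta_i = \theta'_i$. By Lemma~\ref{lemma:restart_times}, on each inter-event interval $(T\park, T\park[k+1])$ the intensity $\lambda^i_{\theta_i}$ is the positive part of the map $t \mapsto \mu_i + (\lambda^{i\star}_{\theta_i}(T\park) - \mu_i)\mathrm{e}^{-\beta_i(t - T\park)}$, which is real-analytic wherever it is positive. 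Hence a.e.\ equality upgrades to equality at every continuity point, and therefore to equality of the underlying intensities $\lambda^{i\star}_{\theta_i} = \lambda^{i\star}_{\theta'_i}$ at every instant where both are positive.

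I would first read off $\mu_i$: on $(0, T\park[1])$ no event has occurred, so both intensities equal their baselines and $\mu_i = \mu'_i$. The remaining parameters I would recover, for each ordered pair $(i,j)$ with $i \neq j$, from the window $[\tau, \tau_+)$ provided by the hypotheses. By~\ref{hyp:iii} the only events in this window come from $N^j$, so there $\lambda^{i\star}_{\theta_i}$ is driven solely by exponential decay toward $\mu_i$ at rate $\beta_i$ and by jumps of size $\alpha_{ij}$; by~\ref{hyp:ii} its left limit at $\tau$ coincides with the observed value $\lambda^i_{\theta_i}(\tau^-) > 0$ and is thus known. If the jump at $\tau$ keeps the intensity positive, the observed path just after $\tau$ equals $\mu_i + C\mathrm{e}^{-\beta_i(t-\tau)}$ with $C = \lambda^i_{\theta_i}(\tau^-) + \alpha_{ij} - \mu_i$; matching this against its primed counterpart on an interval with an accumulation point pins down both $\beta_i$ and the coefficient $C$, hence $\alpha_{ij}$. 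Assumption~\ref{assu:beta} is exactly what guarantees that a single rate $\beta_i$ governs all such windows simultaneously.

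When strong inhibition instead drives $\lambda^i_{\theta_i}$ to zero at $\tau$, the increment $\alpha_{ij}$ is not visible as a jump of the observed path. In that case I would invert the closed-form restart time of Lemma~\ref{lemma:restart_times}: the instant at which $\lambda^i_{\theta_i}$ becomes positive again is $\tau + \beta_i^{-1}\log((\mu_i - \lambda^{i\star}_{\theta_i}(\tau))/\mu_i)$, so with $\mu_i$ and $\beta_i$ already fixed this observed restart time determines $\lambda^{i\star}_{\theta_i}(\tau)$, and hence $\alpha_{ij}$. The self-interaction $\alpha_{ii}$ is recovered in the same way at an event time of $N^i$ itself, where the left limit of the intensity is automatically positive since events of type $i$ can occur only where $\lambda^i_{\theta_i} > 0$. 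Collecting these equalities gives $\theta_i = \theta'_i$ for every $i$, and therefore $\theta = \theta'$.

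I expect the positive-part operator to be the main obstacle. On the intervals where inhibition clamps $\lambda^i_{\theta_i}$ to zero, the relevant increment is erased from the observed path, so it cannot be read from a directly visible jump and its recovery must be routed through the restart-time identity of Lemma~\ref{lemma:restart_times}. This is precisely the purpose of hypotheses \ref{hyp:ii} and \ref{hyp:iii}: for each ordered pair they isolate a window in which the influence of $N^j$ on $N^i$ is both disentangled from the other components and at least partially visible above zero, which is what keeps every parameter reachable despite the non-linearity.
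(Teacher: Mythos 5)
Your overall architecture matches the paper's: reduce to one coordinate $i$ at a time, read $\mu_i$ off the interval before the first event, recover $\beta_i$ from the exponential relaxation on an interval where the intensity is positive and non-constant, use the window $[\tau,\tau_+)$ for the cross-terms $\alpha_{ij}$, and treat $\alpha_{ii}$ at an event of $N^i$ itself. The weak point is your treatment of the clamped case. When the jump at $\tau$ drives $\lambda^i_{\theta_i}$ to zero, you propose to recover $\lambda^{i\star}_{\theta_i}(\tau)$ by inverting the restart-time formula $\tau+\beta_i^{-1}\log\left((\mu_i-\lambda^{i\star}_{\theta_i}(\tau))/\mu_i\right)$. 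But that closed form is the restart time only if the intensity becomes positive again \emph{before the next event}; Lemma~\ref{lemma:restart_times} truncates it at the next event time otherwise. The hypotheses explicitly allow several events of $N^j$ in $[\tau,\tau_+)$, so the process may stay clamped across several inter-event subintervals, each one injecting a further $\alpha_{ij}$; the observed restart then occurs in a later subinterval and the identity you invert no longer holds as written (one would have to iterate it across subintervals and rule out degenerate endpoint cases). The same issue affects your recovery of $\alpha_{ii}$ by reading the jump at an event of $N^i$: positivity of the left limit is guaranteed there, but positivity just after the event is not, and other components may fire before the restart.

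The paper avoids this case analysis altogether. Having fixed $\mu_i$ and $\beta_i$, it evaluates the \emph{underlying} intensity only at the two left limits $\tau^-$ and $\tau_+^-$, where positivity — hence equality of underlying intensities — is guaranteed (by hypothesis (i) and by the fact that $\tau_+$ is an event of $N^i$, respectively). This yields two linear relations $\sum_{l}(\alpha_{il}-\alpha'_{il})A^l=0$ and $\sum_{l}(\alpha_{il}-\alpha'_{il})B^l=0$ with $B^l=A^l\mathrm{e}^{-\beta_i(\tau_+-\tau)}$ for every $l\neq j$, and subtracting isolates $(\alpha_{ij}-\alpha'_{ij})$ multiplied by a strictly positive sum over the $N^j$-events in $[\tau,\tau_+)$ — no restart times, and no discussion of whether the intensity hits zero in between. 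Similarly, $\alpha_{ii}$ is obtained by evaluating at $T_2^{i-}$ once all cross-terms are identified. If you replace your jump/restart-time argument by this two-point evaluation, your proof closes; as it stands, the clamped branch is not fully justified.
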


The proof is presented in Appendix \ref{appendix:proof}. To the best of our knowledge, the only identifiability result for non-linear multivariate Hawkes processes is given by \cite{Sulem2021} but only if interaction functions $h_{ij}$ have a bounded support. Their proof strongly relies on this assumption since they extend to the multivariate case the renewal properties proved by \cite{Costa2020} for non-linear univariate Hawkes processes with a bounded kernel.
Their proof also requires some assumptions to ensure that one process is not totally inhibited, which is also a consequence of the assumptions that we propose, as discussed in Section~\ref{sec:identifiability}. 

    As expected, Proposition~\ref{prop:compensator} makes it possible to compute explicitly the log-likelihood expressed in Equation~\eqref{eq:general_log_likelihood} for multivariate exponential Hawkes processes.
    This is formalised in Corollary~\ref{cor:mle} (and proved in Appendix~\ref{appendix:proof_cor_mle}).

    \begin{corollary}\label{cor:mle}
        Let \(i \in \{1, \dots, d\}\) and \(k\geq 2\) an integer.
        Let us denote
        \[
          S_k^i := T\park[N(T_k^i)-1]
          = T\park[\max\{\ell \in \mathbb N^* : {T\park[\ell]} < T_k^i\}] \,,
        \]
        the time preceding directly \(T_k^i\), the \(k^{\text{th}}\) observation of process \(N^i\).
        
        Then, for all \(\theta \in \Theta\),
        the log-likelihood of process \(N^i\) reads:
        \begin{align}\label{eq:log_likelihood}
            \ell^i_t(\theta_i) = &\log{\mu_i} + \sum_{k=2}^{N^i(t)}{\log{\adaptedpar{\mu_i + (\lambda^{i\star}_{\theta_i}(S_k^i) - \mu_i)\mathrm{e}^{-\beta_i(T_k^i - S_k^i)}}}} - \Lambda^i_{\theta_i}(t)\,,
        \end{align} where $\Lambda^i_{\theta_i}$ is given by Equation~\eqref{eq:compensator_exponential} and with convention $\log{(x)} = -\infty$ for $x\leq 0$.

    \end{corollary}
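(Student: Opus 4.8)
The plan is to start from the general log-likelihood of Equation~\eqref{eq:general_log_likelihood}, in which the compensator \(\Lambda^i_{\theta_i}(t)\) is already available in closed form from Proposition~\ref{prop:compensator}. Hence the only quantity left to make explicit is the sum of log-intensities \(\sum_{k=1}^{N^i(t)} \log \lambda^i_{\theta_i}(T_k^{i-})\), and the whole corollary reduces to evaluating the left limit \(\lambda^i_{\theta_i}(T_k^{i-})\) at each event time of the subprocess \(N^i\).

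First I would describe the behaviour of the underlying intensity between two consecutive global event times. Fix an event \(T_k^i\) of subprocess \(i\) and let \(S_k^i = T\park[N(T_k^i)-1]\) be the global event directly preceding it, so that the open interval \((S_k^i, T_k^i)\) contains no event of any subprocess. Under Assumption~\ref{assu:beta} all kernels acting on \(N^i\) share the common rate \(\beta_i\), so the accumulated contributions factor through a single exponential and, exactly as already exploited in Lemma~\ref{lemma:restart_times}, on this interval
\[
  \lambda^{i\star}_{\theta_i}(t) = \mu_i + \bigl(\lambda^{i\star}_{\theta_i}(S_k^i) - \mu_i\bigr)\mathrm{e}^{-\beta_i(t - S_k^i)}.
\]
Letting \(t \to (T_k^i)^-\) then produces precisely the bracketed expression inside the logarithm of Equation~\eqref{eq:log_likelihood}.

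The delicate step, and the one I expect to be the main obstacle, is to argue that the positive-part operator is inactive at each event time, namely \(\lambda^i_{\theta_i}(T_k^{i-}) = \lambda^{i\star}_{\theta_i}(T_k^{i-}) > 0\), so that the logarithm is finite and given by the relaxation formula above rather than by \(\log 0 = -\infty\). The idea is to invoke Lemma~\ref{lemma:restart_times} on the interval \((S_k^i, T_k^i)\): there the conditional intensity vanishes up to the associated restart time and coincides with \(\lambda^{i\star}_{\theta_i}\) afterwards. Because \(N^i\) produces a point at \(T_k^i\), this restart time must be strictly smaller than \(T_k^i\) — a point process almost surely has no point at an instant where its intensity vanishes on a left-neighbourhood — so that \(\lambda^{i\star}_{\theta_i}\) is continuous and strictly positive as \(t \to (T_k^i)^-\) and equals \(\lambda^i_{\theta_i}(T_k^{i-})\). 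Turning this point-process heuristic into a rigorous statement, i.e.\ that events cannot occur at times where the intensity has just been clipped to zero, is the crux; the rest is bookkeeping.

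Finally I would treat the leading term separately. Before the first global event time \(T\park[1]\) one has \(\lambda^{i\star}_{\theta_i} \equiv \mu_i\), since no contribution has yet accumulated, so the first event contributes \(\log \mu_i\), which accounts for the isolated leading term of Equation~\eqref{eq:log_likelihood}; for \(k \geq 2\) the preceding global event \(S_k^i\) is well defined and the relaxation formula applies verbatim. Substituting these log-terms together with the closed-form compensator of Proposition~\ref{prop:compensator} into Equation~\eqref{eq:general_log_likelihood} yields Equation~\eqref{eq:log_likelihood}, with the convention \(\log(x) = -\infty\) for \(x \leq 0\) absorbing the measure-zero configurations ruled out by the point-process argument.
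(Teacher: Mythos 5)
Your overall architecture matches the paper's proof: use the relaxation formula $\lambda^{i\star}_{\theta_i}(t) = \mu_i + (\lambda^{i\star}_{\theta_i}(S_k^i) - \mu_i)\mathrm{e}^{-\beta_i(t-S_k^i)}$ on $[S_k^i, T_k^i)$ (this is exactly Lemma~\ref{lemma:intensity_interval_form}), take the left limit at $T_k^i$, handle $k=1$ separately via $\lambda^{i\star}_{\theta_i}(T_1^{i-})=\mu_i$, and plug into Equation~\eqref{eq:general_log_likelihood} together with Proposition~\ref{prop:compensator}. That part is fine.

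However, the step you single out as ``the crux'' --- proving that the positive part is inactive at each event time, i.e.\ $\lambda^i_{\theta_i}(T_k^{i-}) = \lambda^{i\star}_{\theta_i}(T_k^{i-}) > 0$ --- is both unnecessary and, as you propose to establish it, unprovable. It is unnecessary because the identity $\log \lambda^i_{\theta_i}(T_k^{i-}) = \log \lambda^{i\star}_{\theta_i}(T_k^{i-})$ holds unconditionally under the stated convention: if $\lambda^{i\star}_{\theta_i}(T_k^{i-}) \leq 0$ then $\lambda^i_{\theta_i}(T_k^{i-}) = 0$ and both sides equal $-\infty$; if $\lambda^{i\star}_{\theta_i}(T_k^{i-}) > 0$ the two intensities coincide. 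This case split is the entirety of the paper's argument on this point, and no almost-sure positivity claim is needed for Equation~\eqref{eq:log_likelihood} to be an exact identity. More importantly, your proposed point-process argument (``a point process almost surely has no point where its intensity vanishes on a left-neighbourhood'') applies only to the parameter that actually generated the data, whereas the corollary is asserted for \emph{all} candidate $\theta \in \Theta$ --- which is essential, since the MLE requires evaluating $\ell_t^i(\theta_i)$ at arbitrary, possibly badly misspecified, parameters for which $\lambda^i_{\theta_i}$ may genuinely vanish at observed event times with positive probability. So the gap you flag as needing to be closed should instead be dissolved: drop the positivity claim entirely and let the $\log(x) = -\infty$ convention carry both sides of the identity.
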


    Algorithm~\ref{alg:likelihood} in Appendix~\ref{app:log-lik} presents the iterative computation of the likelihood using Equation~\eqref{eq:log_likelihood}. In particular, the complexity of the computation is $O(N(t)\times d)$.
    It is then possible to establish the Maximum Likelihood Estimator, which we will refer to as (MLE). These estimators will be denoted by a tilde: $(\tilde\mu_i)_i$, $(\tilde\alpha_{ij})_{ij}$, $(\tilde\beta_i)_i$ and $(\tilde h_{ij})_{ij}$.

\subsection{On identifiability conditions}\label{sec:identifiability}

        In the previous section we presented a result on the identifiability of multivariate Hawkes process with inhibition via Theorem~\ref{theorem:identifiability}.
        Let us mention that identifiability of parameters \(\mu_i\) and \(\beta_i\) do not require any assumption.
	  The challenge of the proof lies in identifying parameters \(\alpha_{ij}\), which is achieved thanks to Conditions~\ref{hyp:ii} and \ref{hyp:iii} of Theorem~\ref{theorem:identifiability}.
	  Condition~\ref{hyp:ii} allows to control strong inhibition scenarios by ensuring that each subprocess' intensity is positive sufficiently often, and not only at its own event times,
	  while Condition~\ref{hyp:iii} enables to disentangle the contributions of each subprocess.
	  We believe that these assumptions are only sufficient and could be weakened at the cost of a more intricate analysis.

    In this section we will discuss this set of conditions by providing two examples of parameters $\alpha_{ij}$ that allow to apply this result along with one counter-example.

\subsubsection{Examples for which the conditions are fulfilled}
    Let us begin with an example of a two-dimensional process.
    In this situation, as soon as both processes have an infinite amount of events,
    Conditions~\ref{hyp:ii} and \ref{hyp:iii} boil down to finding indexes \(k\ge1\) and \(k'\ge1\) such that \(\lambda^1({T_k^2}^-)>0\) and \(\lambda^2({T_{k'}^1}^-)>0\).
    Indeed, it is then enough to consider \((\tau, \tau_+) = (T_k^2, T_{N(T_k^2)+1}^1)\) for \((i, j)=(1, 2)\) and \((\tau, \tau_+) = (T_{k'}^1, T_{N(T_{k'}^1)+1}^2)\) for \((i, j)=(2, 1)\).

    \begin{example}\label{ex:2dim_identifiability}
        Let us assume that $N$ is a two-dimensional Hawkes process with the following matrix for parameters $\alpha_{ij}$:
        \[\begin{pmatrix}
                      \alpha_{11} & \alpha_{12}\\
                      \alpha_{21} & \alpha_{22}
                      \end{pmatrix}=
                      \begin{pmatrix}
                      \alpha_{11} & 0.0\\
                      \alpha_{21} & \alpha_{22}
                      \end{pmatrix}\,,\]
        such as $\alpha_{11} \leq 0$, $\alpha_{21} \geq 0$ and $\alpha_{22} > 0$.

        Both processes contain an infinite number of events as process $N^1$ can be seen as a one-dimensional Hawkes process and the second one has a lower-bounded intensity $\lambda^2 > \mu^2$.
        Now, since \(\lambda^2(t)>0\) for all \(t\ge0\), we have \(\lambda^2({T_1^1}^-)>0\).
        Then, let us first remark that the event times of process $N^1$ occur independently of $N^2$, as $\alpha_{12} = 0$.
        So, for
        any event time $T_{\ell}^1$ of $N^1$, the restart times can be written as if $N^1$ was a univariate Hawkes process (see \cite{bonnet2021}):
\[T\park[\ell]^{1\star} = T_\ell^1 + \beta_1^{-1}\log\left(  \frac{\mu_1 - \lambda^{1\star}(T_\ell^1)}{\mu_1}  \right)\II_{\{\lambda^{1\star}(T_\ell^1) < 0\}}\,.\]

        For $t > T\park[\ell]^{1\star}$ small enough, both $\lambda^1(t)$ and $\lambda^2(t)$ are positive,
        meaning that the next event time can come either from \(N^1\) or from \(N^2\).
        If we consider an infinite sequence of event times, we will eventually observe an event \(T_k^2\) of $N^2$ such that $\lambda^1({T_k^2}^-)>0$.
       
    \end{example}

    This gives a set of Hawkes processes with inhibition that verify the assumptions of Theorem~\ref{theorem:identifiability}. For higher dimensions, the multiplicity of all possible connections between processes complicates the study of general cases from a theoretical point of view. Example~\ref{example:ddim_example} illustrates a case where the conditions are fulfilled by considering identically distributed processes.

    \begin{example}\label{example:ddim_example}
        Let us consider a $d$-dimensional Hawkes process,
        as well as $\mu, \alpha_+, \alpha_-, \beta$ such that for any $i$ and for any $j\neq i$:
        \begin{align*}
    	\mu_i = \mu > 0\,,     \quad    \alpha_{ii} = \alpha_- \leq 0\,,\\ 
	    \beta_i = \beta > 0\,, \quad \alpha_{ij} = \alpha_+ \geq 0\,.
        \end{align*}
        As each process has the same parameters for $\mu_i$ and $\beta_i$ along with the exact same interactions, all processes are identically distributed and so in order to verify Conditions~\ref{hyp:ii} and \ref{hyp:iii} it is enough to verify them for $i=1$ and $j=2$.

        Fulfilling both conditions amounts to verifying that, with non-zero probability, we can find $k\ge1$ such that:
        \begin{enumerate}
            \item $\lambda^1({T_k^2}^-) > 0$;
            \item $T\park[N(T_k^2) + 1]$ is an event from process $N^1$.
        \end{enumerate}

        Furthermore, let us notice that as all processes are cross-exciting, if $\lambda^1({T_k^2}^-) > 0$ for a certain $k$, then $\lambda^1(t) > 0$ for $t>T_k^2$ small enough, so $T\park[N(T_k^2) + 1]$ can come from process $N^1$. It remains now to verify that with non zero probability $\lambda^1({T_k^2}^-) > 0$ for a certain $k\ge1$. 

        But the opposite would entail that almost surely, for all indexes \(k\ge1\), $\lambda^1({T_k^2}^-) = 0$.
        Yet, as all processes are identically distributed,
        this means that, for all $j\neq 2$, \(\lambda^j({T_k^2}^-) = 0\) almost surely.
        It would then follow that, for every $i$ and for every $j\neq i$, for all $k> 1$, $\lambda^j({T_k^i}^-) = 0$ almost surely. Let us consider $T_k^{i_0}$ for a fixed $k$ and $i_0$. As each process $N^j$ for $j\neq i$ is then excited with the same parameter $\alpha_+$ and they are identically distributed, then $\lambda^j(T_k^{i_0})$ are identically distributed and independent conditionally on history $\mathcal{F}_{T_k^{i_0}}$. It follows then that the restart times associated to $T_k^{i_0}$ of each process are independent and identically distributed. Consequently, there is a non-zero probability that at least two processes regenerate at roughly the same time before another time of process $N^{i_0}$ (as it self-inhibits). Once that for $j_0, j_1$, $\lambda^{j_0}$ and $\lambda^{j_1}$, are positive, the next event time may come from either process, and so either $\lambda^{j_1}(T\park[N(T_k^{i_0}) + 1]^-) > 0$ with $T\park[N(T_k^{i_0}) + 1]$ from process $N^{j_0}$, or the inverse, which contradicts the fact that for all $i,j$ and for all $k>1$, $\lambda^j(T_k^i) = 0$.
        So for at least one pair $(i, j)$ and one $k$, we must have $\lambda^{j}({T_k^{i}}^-) > 0$ and so it has to occur for all pairs, in particular $(1, 2)$.

    \end{example}

    In the next section we present Example~\ref{ex:2dim_counterexample} of a specific set of parameters for which both conditions are not necessarily met.

\subsubsection{Example where the conditions are not fulfilled}
	\begin{example}\label{ex:2dim_counterexample}
    Let us consider the Hawkes process defined by the following parameters:
		\[
            	     \begin{pmatrix}
                      \mu_{1} \\
                      \mu_{2} 
                      \end{pmatrix} =
                      \begin{pmatrix}
                      10^5 \\ 
                      0.1
                      \end{pmatrix}\,,
                      \quad
                      \begin{pmatrix}
                      \alpha_{11} & \alpha_{12}\\
                      \alpha_{21} & \alpha_{22}
                      \end{pmatrix}=
                      \begin{pmatrix}
                      1.0 & 1.0\\
                      -10^6 & -10^6
                      \end{pmatrix}\,,
                      \quad
                      \begin{pmatrix}
                      \beta_{1} \\
                      \beta_{2} 
                      \end{pmatrix} =
                      \begin{pmatrix}
                      1.0\\ 
                      10^{-5}
                      \end{pmatrix}\,.
    		\]
            The probability that the first event time $T\park[1]$ is from process $N^1$ is $10^5 / (10^5 + 0.1) \approx 1$. If the first event is indeed from process $N^1$, then process $N^2$ is strongly inhibited and in that case the restart time $T_{\park[1]}^{2\star}$ is equal to \[T_{\park[1]}^{2\star} = T_1 + 10^5 \log(10^7)\,,\]
            which is very far from \(T_1\).
            As $\lambda^1$ is lower-bounded by $10^5$, the next candidate of process $N^1$ is roughly distributed as an exponential random variable with parameter $10^5$ so the next event time is with high probability of process $N^1$.
            If this is the case, process $N^2$ is further inhibited,
            and with probability going exponentially quickly to \(1\), all next event times will come from process \(N^1\),
            preventing us from granting Conditions~\ref{hyp:ii} and \ref{hyp:iii}.

	\end{example}

 \subsection{Recovering the graph of interactions}\label{sec:support}
   The aim of this section is to describe methodologies able to estimate non-null interactions between subprocesses, which boils down to detecting parameters such that \(\alpha_{ij} \neq 0\).
   Recovering interactions has an interest, first, in providing a graphical interpretation of the Hawkes model, as it describes which subprocesses are actually connected within the whole process.
   Moreover, the graph of interactions can also be used as a reduction dimension tool,
   for instance if we focus on the dynamic of one single subprocess,
   the activity of which can be impacted by a sub-network of surrounding processes that we want to identify,
   as investigated in \cite{neuro2022} for neuronal activity.
   
   Estimating non-null interactions is a challenging topic, which has been particularly studied for linear regression (see for instance \cite{tibshirani1996}).
   Regarding Hawkes processes with inhibition, this is even more demanding because of the non-differentiability of the log-likelihood.
   Therefore, we propose, in the subsequent sections, two post hoc techniques (i.e.\ after computing the MLE estimator) not related to numerical optimization, and additionally, having the benefit to scale easily to high-dimensional processes.
      
   \subsubsection{Thresholding}
   
   The first method that will be referred to as (MLE-$\varepsilon$) is obtained by adding a thresholding step to the classic Maximum Likelihood Estimation (MLE). This is similar to the cumulative percentage of total variation approach presented in Principal Component Analysis \cite[Section 6.1.1]{Joliffe2002}. All absolute estimated values $\lvert \tilde\alpha_{ij}\rvert$ are arranged in increasing order $(\lvert \tilde\alpha_{(k)}\rvert)_{k\in \{1, \ldots, d^2\}}$. We compute then the cumulative sums
   $s_k = \sum_{l=1}^{k}{\lvert \tilde\alpha_{(l)}\rvert}$
   and write $S := s_{d^2}$ the sum of all absolute estimated values.
   Then all estimations $\tilde\alpha_{(k)}$ such that \[s_k < \varepsilon S\,,\] are set to zero, for a threshold $\varepsilon \in (0,1)$.
  Subsequently, all non-null estimations $\tilde\alpha_{ij}$ are then re-estimated by maximising the log-likelihood.
   
   The choice of an optimal threshold level $\varepsilon$ requires a way of comparing estimations, which is achieved thanks to the goodness-of-fit procedure described in Section \ref{sec:goodness},
   and which is illustrated in Section \ref{sec:description_methods}.
   
   \subsubsection{Confidence interval}\label{sec:confidence_interval}
   
   The second method is applicable when a sample $(N_1, \ldots, N_n)$ of $n$ realisations of a multivariate Hawkes process is available. For every $i,j$, we average all estimations $\tilde\alpha_{ij}$ over the realisations $N_1, \ldots, N_n$ to obtain $\bar\alpha_{ij}$ and then we determine a confidence interval around each estimation at a given confidence level $1-\eta$. Then each estimation for which the confidence interval contains $0$ is set to zero. Subsequently, all non-null estimations $\tilde\alpha_{ij}$ are re-estimated.
   
   In this paper we consider two different confidence intervals. 
   \begin{itemize}
   \item (CfE) corresponds to the empirical interval
   \[
      \left[\alpha_{(\lfloor{\frac{\eta}{2} n}\rfloor)}, \alpha_{(\lceil (1-\frac{\eta}{2}) n \rceil)}\right]\,,
    \]
    where, $(\alpha_{(k)})_{k\in \{1, \ldots, n\}}$ is the sequence of the sorted estimations of $\alpha_{ij}$,
    and $\lfloor \cdot \rfloor$ and $\lceil \cdot \rceil$ are respectively the floor and the ceiling functions.
   \item (CfSt) corresponds to
   \[\left[\bar\alpha_{ij} - t_{1-\frac{\eta}{2}} s_n, \bar\alpha_{ij} + t_{1-\frac{\eta}{2}} s_n \right]\,,\]

   where $s_n$ is the empirical standard deviation of the sample and $t_{1-\frac{\eta}{2}}$ is the quantile of level $1-\frac{\eta}{2}$ of the Student distribution with $n-1$ degrees of freedom. 
   This corresponds to a confidence interval obtained for normally distributed estimators.
   For Hawkes processes with exclusively exciting interactions, estimations obtained through the MLE procedure are asymptotically normal as proven in \cite[Theorem 5]{Ogata1978} and as discussed in \cite{Laub2014}.
   For processes with inhibiting interactions, 
   asymptotic normality is still an open question but is in all likelihood true.
   However, one has to be careful when using this estimator, in particular a small number of observations could imply that the asymptotic normality is not achieved. In practice, normality can be tested thanks to a Kolmogorov-Smirnov test.   
   
   \end{itemize}
   
   This method of selection through confidence intervals can be seen as testing the following hypothesis at a confidence level $1-\eta$ for every $i,j$:
   
   \[
   \begin{dcases}
       \mathcal{H}_0 : \alpha_{ij} = 0 \,,\\
       \mathcal{H}_1 : \alpha_{ij} \neq 0\,.
   \end{dcases}
   \]
   We can then compute the corresponding $p$-value for each test and set to zero all parameters for which the null hypothesis is not rejected. As we test $d^2$ different hypotheses, it is essential to incorporate multiple testing procedures. For this purpose, we choose the Benjamini-Hochberg method, consisting in adapting the rejection threshold of each $p$-value. This method enables to control the false discovery rate (FDR). If we denote $V$ the number of rejected true null hypothesis and $R$ the number of rejected true alternative hypotheses, the FDR is defined as \[FDR = \EE\left[\frac{V}{R+V}\right]\,.\] In other words, we control the expected number of true null hypotheses (i.e. parameter $\alpha_{ij}$ is equal to zero) rejected by our testing method. The B-H procedure considers the ordered $p$-values $(p_{(k)})_{k\in\{1, \ldots, d^2\}}$ and compares each one to the adapted rejection threshold $(1-\eta) \frac{k}{d^2}$. Then, we determine the largest $K\in\{1, \ldots, d^2\}$ such that $p_{(K)} < (1-\eta) \frac{K}{d^2}$ and we reject all hypothesis such that $p_{(k)} \leq p_{(K)}$

  \subsection{Goodness-of-fit}\label{sec:goodness}
    As a benefit of our approach, it is possible to perform a goodness-of-fit test for assessing the quality of estimations. This is particularly useful when choosing between several estimations (such as those introduced before), in particular to choose an optimal level of thresholding for the (MLE-$\varepsilon$) method.
    The closed-form expression of the compensator given in Proposition~\ref{prop:compensator} enables the use of the Time Change Theorem for inhomogeneous Poisson processes \cite[Proposition 7.4.IV]{DaleyV1}.
    For any $i$, the sequence of transformed times $(\Lambda^i(T_k^i))_k$ is a realisation of a homogeneous Poisson process with unit-intensity if and only if $(T_k^i)_k$ is a realisation of a point process with intensity $\lambda^i$.

    We can then define for any $\theta\in\Theta$ the null hypothesis
    \[
      \mathcal H_i : \text{``} (T_k^i)_k \text{ is a realisation of a point process with intensity } \lambda_{\theta_i}^i \text{''}.
    \]
    
    The hypothesis is then tested via a Kolmogorov-Smirnov test between the empirical distribution $(\Lambda_\theta^i(T_{k+1}^i) - \Lambda_\theta^i(T_k^i))_{k\geq 1}$ and an exponential distribution with parameter $1$. We obtain then $d$ different tests with $p$-values $(p_i)_{i\geq 1}$.
    Using multiple testing approaches can help in determining correctly estimated processes.

    Lastly, we can obtain an additional test by considering the entire sequence of times $(T\park)_{k\geq 1}$ and the total intensity $\lambda$. We obtain then the null hypothesis
    \[
      \mathcal{H}_{tot} : \text{``} (T\park)_k \text{ is a realisation of a point process with intensity } \lambda_\theta \text{''},
    \]
    with corresponding $p$-value $p_{tot}$. This value is obtained by way of a Kolmogorov-Smirnov test between the empirical distribution of $(\Lambda_\theta(T\park[k+1]) - \Lambda_\theta(T\park[k]))_{k\geq 1}$ and an exponential distribution with parameter $1$, this time using the total compensator of the process.

    In the forthcoming sections, this testing procedure is applied to several realisations of event times, that are independent of the considered estimator.
    This enables to assess properly the accuracy of estimations, without knowing the true conditional intensities.
    This is particularly interesting for real-world data.

    Let us mention that the goodness-of-fit procedure is not only an assessment of the overall fit between the model and the observations but it provides also a tool to calibrate the threshold for the (MLE-$\varepsilon$) method. Indeed, for each threshold level $\varepsilon$ chosen over a grid, we compute all p-values (one for each subprocess, one for $p_{tot}$) and we choose the value of $\varepsilon$ that maximises the mean p-value. For the other two methods, (CfE) and (CfSt), the model selection procedure is described in Section~\ref{sec:confidence_interval} and the goodness-of-fit is only performed afterwards in order to establish the quality of the estimations.

\section{Illustration on synthetic datasets}\label{sec:numerical}
    \subsection{Simulation procedure}
    In order to assess the performance of the maximum likelihood estimation method, we simulate different data by using Ogata's thinning method \citep{Ogata1981}. This method consists in defining a piecewise constant function $\lambda^+$ such that for any $k\geq1$ and any $t\in[T\park, T\park[k+1])$, $\lambda^+(t) \geq \lambda(t)$. For this, we define $\lambda^+$ for any $t\in[T\park, T\park[k+1])$ as \[\lambda^+(t) = \sum_{i=1}^{d}{\left(\mu_i + \sum_{j=1}^{d}\int_{0}^{t}{\alpha_{ij}^+\mathrm{e}^{-\beta_i(T\park-s)}\,\mathrm{d}N^j(s)}\right)}\,,\] which corresponds to considering only the positive interactions.

    Four different parameter sets are considered: three sets for 2-dimensional Hawkes processes and a last one for a 10-dimensional process.
    Table~\ref{tab:simulated_2_table} presents the parameters used in Dimension 2. All scenarios contain at least one negative interaction ($\alpha_{ij}<0$). Scenario (1) is a Hawkes process where all parameters are non-null whereas Scenarios (2) and (3) are chosen to study the performance of our methods when estimating null interactions ($\alpha_{12}$ for Scenario (2) and $\alpha_{21}$ for Scenario (3)).
    All simulations have exactly \(5000\) event times in total.
    \begin{table}[h!]
    \begin{center}  
        \centering
        \begin{tabular}{c|c|c|c}

             Scenario & (1) & (2) & (3) \\
             \toprule
             $\begin{pmatrix}
            \mu^1\\
            \mu^2
            \end{pmatrix}$ & $\begin{pmatrix}
            0.5\\
            1.0
            \end{pmatrix}$ & $\begin{pmatrix}
            0.7\\
            1.0
            \end{pmatrix}$&$\begin{pmatrix}
            1.2\\
            1.0
            \end{pmatrix}$\\\midrule
            $\begin{pmatrix}
            \alpha_{11} & \alpha_{12}\\
            \alpha_{21} & \alpha_{22}
            \end{pmatrix}$&$\begin{pmatrix}
            -1.9 & 3.0\\
            1.2 & 1.5
            \end{pmatrix}$&$\begin{pmatrix}
            0.2 & 0.0\\
            -0.6 & 1.2
            \end{pmatrix}$&$\begin{pmatrix}
            -1.0 & 0.1\\
            0.0 & -0.8
            \end{pmatrix}$\\\midrule
            $\begin{pmatrix}
            \beta_1\\
            \beta_2
            \end{pmatrix}$ & $\begin{pmatrix}
            5.0\\
            8.0
            \end{pmatrix}$ & $\begin{pmatrix}
            3.0\\
            2.0
            \end{pmatrix}$&$\begin{pmatrix}
            0.3\\
            0.5
            \end{pmatrix}$
        \end{tabular}
        \caption{Parameters for simulations of two-dimensional Hawkes processes.}
        \label{tab:simulated_2_table}
    \end{center}
    \end{table}

    In order to carry out the hypothesis testing procedure, we simulate a sample of Hawkes processes independent from the one used for the estimation. Each testing sample contains as many realisations as the estimation sample. All $p$-values presented in the paper correspond to the average obtained over all realisations.

    \subsection{Proposed methods and comparison to existing procedures}
    \label{sec:description_methods}
    The main focus of this paper is to assess the performance of the maximum likelihood estimator to correctly detect the interacting functions of our processes without ambiguity, estimators are denoted with a tilde: $(\tilde\mu_i)_i$, $(\tilde\alpha_{ij})_{ij}$, $(\tilde\beta_i)_i$ and \((\tilde h_{ij})_{ij}\).
    In this paper we propose four methods previously introduced in Section \ref{sec:support}:
    \begin{itemize}
        \item (MLE) The estimator obtained by minimising the opposite of the log-likelihood $-\sum_{i=1}^{d}{\ell_t^i(\theta)}$ (see Equation~\eqref{eq:log_likelihood}). The log-likelihood is computed via Algorithm~\ref{alg:likelihood} and the minimisation is done with the L-BFGS-B method \citep{Byrd1995}.
        \item (MLE-$\varepsilon$) The estimator obtained by adding a thresholding step to the previous method to determine the non-null estimations.  The value of $\varepsilon$ is chosen such that it maximises the mean over all $p$-values obtained.
        \item (CfE) The estimator whose support is obtained through the empirical confidence intervals.
        \item (CfSt) The estimator using Student distributed intervals after verification of normality of the estimations.
    \end{itemize}

    The latter three methods are specially interesting for Scenarios (2) and (3) of the 2-dimensional processes, and also for the 10-dimensional setting.

    \begin{remark}
    Another option considered for (MLE-$\varepsilon$) is to use instead the values $\lvert\tilde \alpha_{ij} / \tilde \beta_{i}\rvert$ for the thresholding. Numerical results slightly differ between the two methods, with the retained method showing better overall $p$-values.
    \end{remark}

    For an informative assessment of the proposed approaches, we compare their performance to estimation methods from the literature.
    However, up to our knowledge, there is no other parametric estimation methods designed for inhibiting processes, that is for handling negative values of \(\alpha_{ij}\).
    As a consequence, we chose to include estimation methods developed for exciting processes, that are nonetheless able to produce negative estimations of \(\alpha_{ij}\).
    This is the case for three popular approaches described below.

    \begin{enumerate}
        \item (Approx) The first one \citep{Lemonnier2014} is obtained by approaching the compensator $\Lambda^i(t)$ (in each log-likelihood $\ell^i_t(\theta)$) by \[\int_{0}^{t}{\lambda^{i\star}(u)\,\mathrm{d}u}\,.\]

        In the case where all interactions are positive, this integral is equal to the compensator. The difference is when interactions are negative as this integral takes into account the negative values of the underlying intensity function. The minimisation is done in the same way as for (MLE) using the L-BFGS-B method.

        \item The other two methods minimise the least-squares loss approximation defined in \cite{Reynaud2014,Bompaire2020} as:
        \[R_t(\theta) =  \int_{0}^{t}{(\lambda_{\theta}(u))^2\,\mathrm{d}u} - \frac{2}{t}\sum_{k=1}^{N(t)}{\lambda^{m_k}_{\theta}(T\park^-)} \,,\]
        which is an observable approximation of \(\|\lambda_\theta - \lambda\|_t^2 = \int_0^t (\lambda_\theta(u) - \lambda(u))^2 \, \mathrm du\) up to a constant term.
        In \cite{Bompaire2020}, all interactions are assumed to be positive, however the implemented version of this method in the package \texttt{tick} \cite{Bompaire2018} allows to retrieve negative values. For this, we consider two different kernel functions from this implementation:
        \begin{itemize}
        \item (Lst-sq) $h_{ij}(t) = \alpha_{ij}\beta_{ij}e^{-\beta_{ij}t}$, where $\beta_{ij}$ is fixed beforehand by the practitioner. In practice, we fix $\beta_{ij} = \beta_i$ to be consistent with our model (see Assumption~\ref{assu:beta}). The only solver in the implementation that provides negative values is BFGS,
        which is limited to work with an $\ell_2$-penalty. The grid of values $\{1, 10, \ldots, 10^6\}$ is considered for the regularisation constant. To obtain the best estimation for this method, we choose the constant that minimises the relative squared error over all estimated parameters. 
        
        \item (Grid-lst-sq) $h_{ij}(t) = \sum_{u=1}^{U}{\alpha_{ij}^u\beta^u e^{-\beta^u t}}$, with $(\beta^u)_u$ a fixed grid of parameters. In our case, we choose $U=d$ and the grid contains each parameter $\beta_{i}$. Intuitively, by applying an $\ell_1$-penalty, this method would be able to retrieve the corresponding parameter $\beta_{i}$ for each process.
        However, in practice, the implementation uses BFGS as optimiser and is limited to work with an $\ell_2$-penalty. As for (Lst-sq), the regularisation parameter is chosen over the grid of values by minimising the relative squared error.
    \end{itemize}
    \end{enumerate}

    \subsubsection{Results on bivariate Hawkes processes}\label{sec:dim2}

   We generate 25 realisations for each parameter set given in Table~\ref{tab:simulated_2_table} and we estimate the parameters for each individual simulation.  We begin this comparison by competing the proposed methods with both (Approx) and (Lst-sq) which are the two methods with the same kernel functions considered in this paper. Figure~\ref{fig:boxplots} displays the relative squared errors for each group of parameters (baselines $(\mu_i)_i$, interaction terms $(\alpha_{i,j})_{i,j}$ and delay factors $(\beta_i)_i$)  by considering vector norms.
    
    {\begin{figure*}[!ht]
    \centering
    \includegraphics[width=0.9\linewidth]{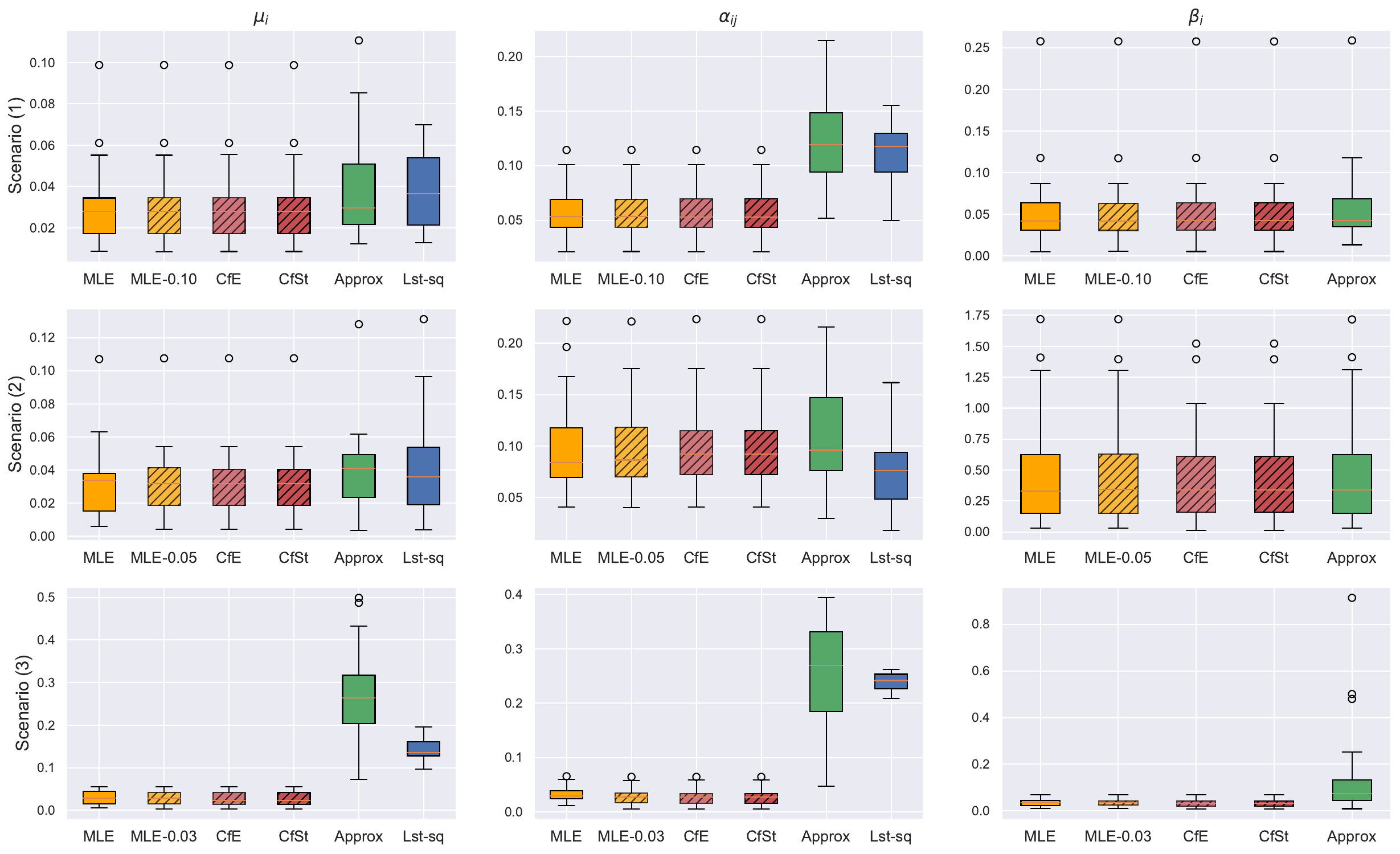}
    \caption{Boxplots of the relative squared error for each group of parameters ($(\mu_i)_i$, $(\alpha_{i,j})_{i,j}$ and $(\beta_i)_i$) for 25 realisations of a two-dimensional Hawkes processes.
    (Lst-sq) does not appear in the last column because it is provided with the true values of \((\beta_i)_i\).
    The proposed methods are (MLE), (MLE-$\varepsilon$), (CfE) and (CfSt).}
    \label{fig:boxplots}
    \end{figure*}}
    
    First, we observe that delay factors \((\beta_i)_i\)
    (last column of Figure~\ref{fig:boxplots})
    are similarly estimated by all approaches.
    Let us recall that (Lst-sq) is not included in the comparison of delay factors:
    since it requires to provide a value for these parameters (they are not estimated),
    it was given the true values of \((\beta_i)_i\) as input.
    An alternative offered by \texttt{tick} is to provide a grid
    of values, but this approach, denoted (Grid-lst-sq), is included in the comparison at the end of the section because of its difference with the exponential model considered here.
    
    Then, regarding the baseline intensities \((\mu_i)_i\) and the interaction factors \((\alpha_{ij})_{ij}\),
    the proposed methods outperform the two other approaches.
    In all Scenarios, (MLE-$\varepsilon$), (CfE) and (CfSt) appear to perform almost identically as they retrieve the same supports and from then, the re-estimations are the same.
    In Scenario~(2), all estimation methods perform reasonably well. 
    This can be explained by the weak inhibiting effect of the interaction $1\to2$, leaving the intensity almost always positive.
    The slight difference between (MLE-$\varepsilon$) and the confidence intervals comes from the fact that (MLE-$\varepsilon$) is applied individually to each estimation so for some estimations it does not set any values to zero.

    In Scenario~(1), the performance of (Approx) and (Lst-sq) is altered, in particular for the \((\tilde \alpha_{ij})_{ij}\) estimations, because the inhibiting effect is stronger than in Scenario~(2).
    The major changes appear in Scenario~(3), where both (Approx) and (Lst-sq) obtain very high relative errors.
    More precisely, they fail to explain the interactions between the two processes (see the estimations \((\tilde \alpha_{ij})_{ij}\) in the middle column of Figure~\ref{fig:boxplots}), which is compensated by a wrong estimation \((\tilde \mu_i)_i\) of baseline intensities.
    This is not surprising since Scenario~(1), and even more Scenario~(3), were designed so that the intensity functions are frequently equal to zero, which induces major differences between true and underlying intensities.
    Since (Approx) and (Lst-sq) are both based on assuming that these two functions are almost equal, the violation of this assumption causes large estimation errors.
    As expected, the proposed methods, which are developed to handle such inhibiting scenarios, provide accurate estimations.
    
    These results are confirmed by the outcomes of the goodness-of-fit test displayed in Table~\ref{tab:p_values_2}.
    It shows indeed the averaged $p$-values for each scenario using both the true parameters and all four estimations from Figure~\ref{fig:boxplots} with 25 simulations different from the ones used for estimation. In particular, we can see that our methods obtain high $p$-values, being very close to those obtained using the true parameters.
    Table~\ref{tab:p_values_2} also highlights when parameters are incorrectly estimated.
    For instance, in Scenario (1), (Approx) correctly estimate Process~2 but provides less accurate estimations for Process~1 (the \(p\)-value is almost half the one obtained with the true parameters), which is the one characterised by a self-inhibiting behaviour.
    In addition, at least one of the proposed methods obtains the highest value for $p_{tot}$ in each scenario, which illustrates the ability of these procedures to reconstruct the complete process $N$.
    Let us note that the very low \(p\)-values obtained by (Approx) and (Lst-sq) for Scenario (3) confirm the ability of the goodness-of-fit procedure to detect when the parameter estimations strongly differ from the true parameters.
    
    \begin{table*}[!ht]
    \begin{center}
    \centering
    \begin{tabular}{c|ccc|ccc|ccc}
          & \multicolumn{3}{c|}{Scenario (1)}& \multicolumn{3}{c|}{Scenario (2)}& \multicolumn{3}{c}{Scenario (3)} \\
         $p$-value & $p_1$ & $p_2$ & $p_{tot}$ & $p_1$ & $p_2$ & $p_{tot}$ & $p_1$ & $p_2$ & $p_{tot}$\\
         \toprule
         True & 0.492 & 0.438 & 0.430 & 0.535 & 0.468 & 0.479 & 0.510 & 0.623 & 0.338\\
         \midrule
         MLE &  0.440 & 0.442  & 0.398  & 0.483  & 0.461  & 0.485 & 0.549  & 0.638 & $0.357$\\
         \midrule
         MLE-$\varepsilon$ & \multirow{3}{*}{0.440} & \multirow{3}{*}{0.442} &  \multirow{3}{*}{0.398} & \multirow{3}{*}{0.488}  & \multirow{3}{*}{0.461} & \multirow{3}{*}{0.491} & \multirow{3}{*}{0.549} & \multirow{3}{*}{0.574}  & \multirow{3}{*}{0.327}  \\
         CfE &  &  &  &  &  &  &  &  &  \\
         CfSt &  &  &  &  &  &  &  &  &  \\
         \midrule
         Approx & 0.257 & 0.442 & 0.358 & 0.483 & 0.452 & 0.459 & $\mathbf{0.0}$ & $\mathbf{0.007}$ & $\mathbf{0.0}$ \\
         Lst-sq & 0.154 & 0.438 & 0.392 & 0.534 & 0.463 & 0.478 & $\mathbf{0.0}$ & $\mathbf{0.0}$ & $\mathbf{0.0}$
    \end{tabular}
    \caption{Average $p$-values for estimations of two-dimensional Hawkes processes for all scenarios. The values are averaged over 25 simulations. In bold the $p$-values correspond to a rejected hypothesis at a confidence level of $0.95$.}
    \label{tab:p_values_2}
    \end{center}
    \end{table*}

     Lastly, let us investigate the estimations obtained via (Grid-lst-sq), which can be used in practice as a way to estimate the parameters $\beta_i$ by providing a grid of possible parameters.
     Let us mention that both of the previous comparisons (boxplots and \(p\)-values) cannot be done here due to the difference in the number of parameters, but we can compare the methods in terms of
     reconstructions $\tilde h_{ij}$ of the interaction functions $h_{ij}$.
     For this purpose, we analyse Figure~\ref{fig:reconstruction_param_1}, which represents the estimated interaction functions \(\tilde h_{ij}\) for all methods in Scenario (3).
     Interestingly, we see that (Grid-lst-sq) performs similarly to (Lst-sq),
     while the latter is fed with all true values \((\beta_i)_i\) for each interaction.
     However, we see that (Grid-lst-sq)  suffers from the same difficulties than (Approx) and (Lst-sq), which was expected since it relies on the same unvalid assumption.
     Let us note that we chose to display the results for Scenario (3) since it highlights the main differences between the compared approaches but the reconstructions for Scenarios (1) and (2)
     can be found in Appendix~\ref{app:numerical}
     (Figures \ref{fig_supp:scenario_1} and \ref{fig_supp:scenario_2}).
     
     {\begin{figure*}[!ht]
     \centering
     \includegraphics[width=0.9\linewidth]{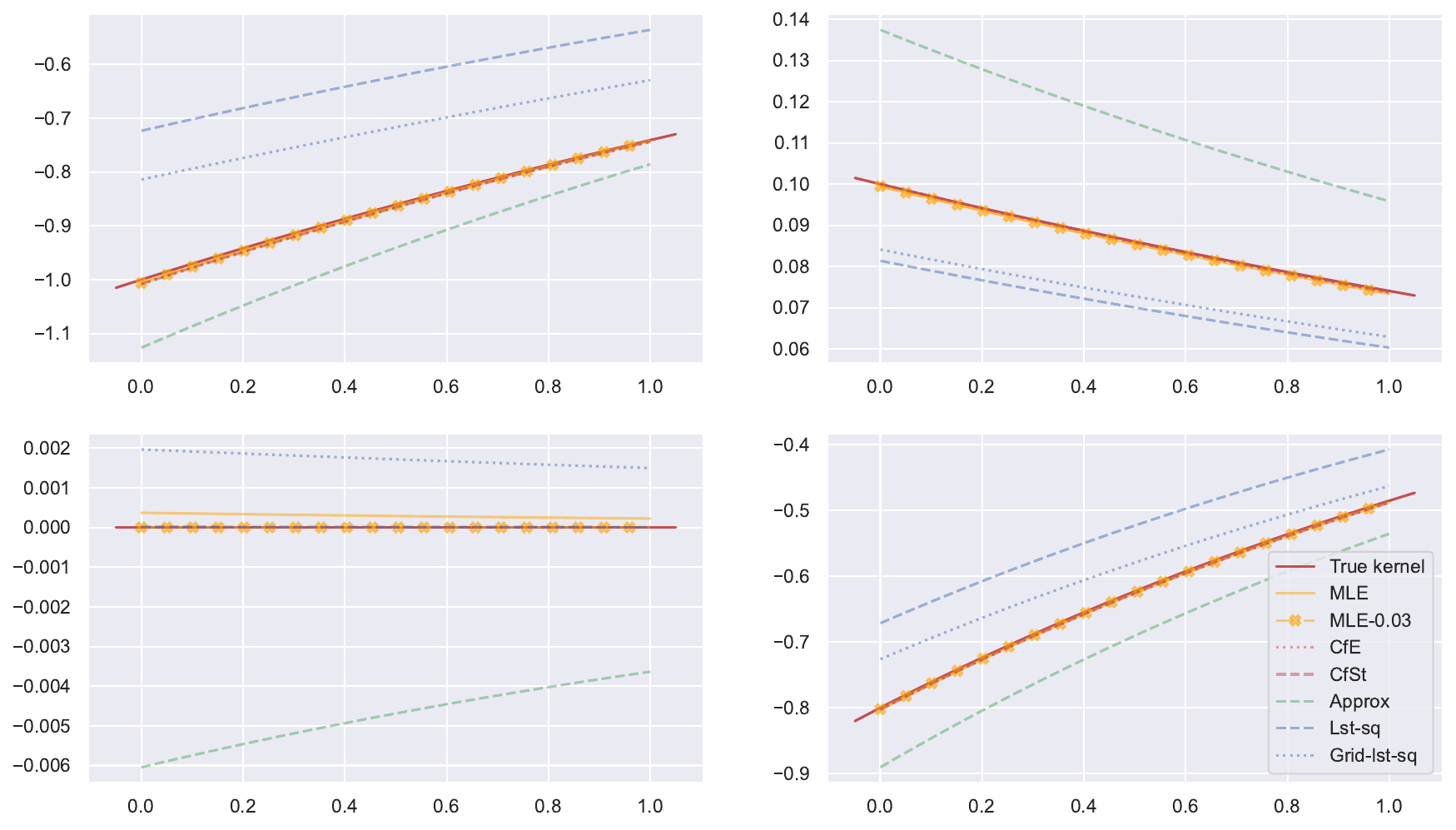}
     \caption{Reconstruction of interaction functions $h_{ij}$ for Scenario (3) of two-dimensional Hawkes processes along with all estimated functions $\tilde h_{ij}$. The real function is plotted in red and 25 estimations are averaged for each method.}
     \label{fig:reconstruction_param_1}
     \end{figure*}}

  \subsubsection{A $10$-dimensional Hawkes process}
    A \(10\)-dimensional Hawkes process is simulated based on a set of parameters corresponding to the quantities $(\sign(\alpha_{ij})\|h_{ij}\|_1)_{ij} = (\alpha_{ij}/\beta_i)_{ij}$ displayed in Figure~\ref{fig:heatmap_10}.
    The chosen parameters fulfil the existence condition $\|\rho(S^+)\| < 1$.

    {\begin{figure*}
    \centering
    \includegraphics[clip,width=0.75\linewidth]{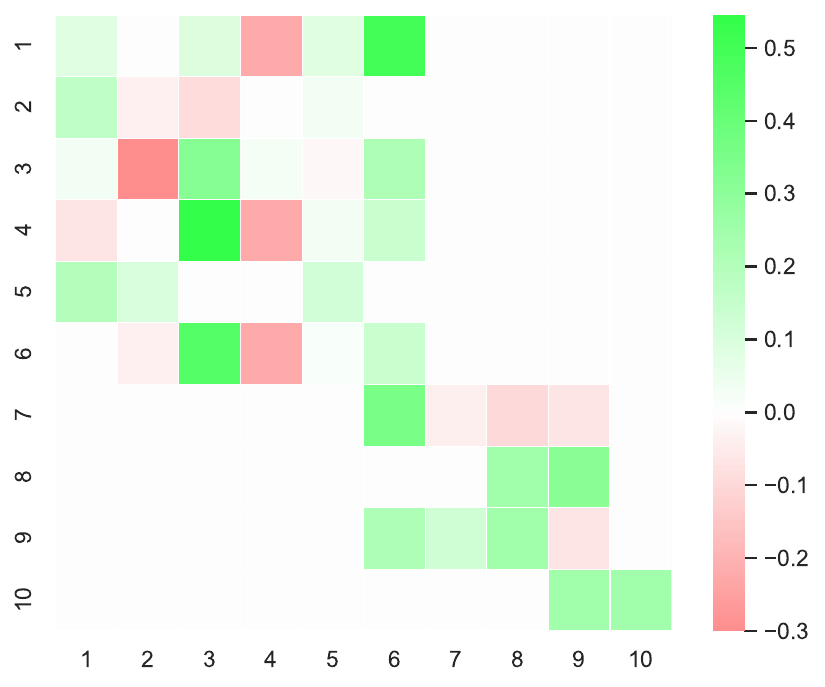}
    \caption{Heatmap of real parameters $(\sign(\alpha_{ij})\|h_{ij}\|_1)_{ij} = (\alpha_{ij}/\beta_i)_{ij}$ for the $10$-dimensional simulation.}
    \label{fig:heatmap_10}
    \end{figure*}}
    {\begin{figure*}
    \centering
    \includegraphics[clip, width=\linewidth]{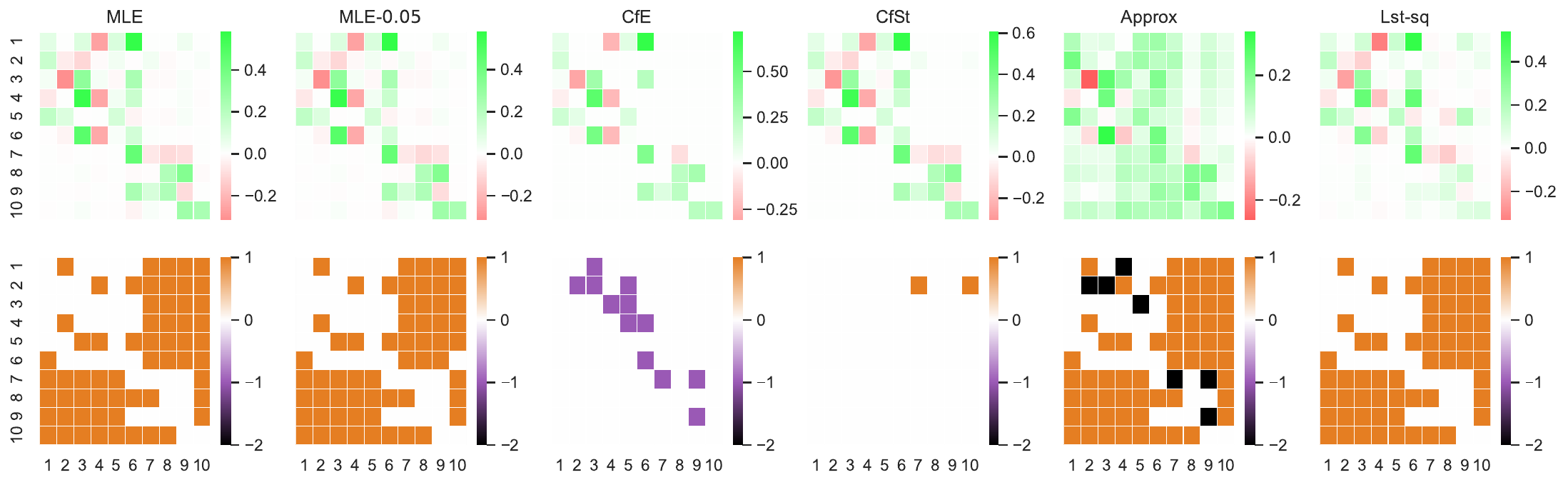}
    \caption{Top row corresponds to the heatmap for each estimation method. Bottom row corresponds to errors made with respect to real parameters from Figure \ref{fig:heatmap_10}. A value of 1 (orange) shows an undetected 0 (non-null estimation for $\alpha_{ij} = 0$), a value of -1 (purple) shows a non-null value set to 0 and a value of -2 (black) shows a non-null value whose sign is wrongly estimated. Each of the compared approaches is described in Section \ref{sec:description_methods}.}
    \label{fig:heatmap_estimated}
    \end{figure*}}

    The corresponding estimations \(\tilde \alpha_{ij}\) and \(\tilde \beta_{i}\) are averaged over 25 realisations and displayed in Figure~\ref{fig:heatmap_estimated}.
    The heatmap representation is convenient for high-dimensional processes as it allows us to see whether the signs of each interaction are well-estimated and whether the null-interactions are correctly detected.

    In this example we decided to keep only (Approx) and (Lst-sq) as comparison methods as these are the ones with the same parameterisation for the kernel functions.
    Among the four methods considered, (Approx) is the only one that wrongly estimates the sign of some interactions, represented by the black boxes in the second row matrix. (MLE) and (Lst-sq) correctly retrieve the sign of each interactions but are unable to detect the null interactions: this is not surprising since (MLE) does not contain a regularisation step and the (Lst-sq) estimator is implemented with a $\ell_2$-penalty which does not provide a sparse solution.
    On the one hand, (MLE-$\varepsilon$) is in this case quite conservative by setting a single value equal to 0 compared to (MLE).
    On the other hand, both confidence intervals methods improve the number of interactions whose sign is correctly estimated. Interestingly, we see that (CfE) sets more values equal to zero than it should (purple boxes) whereas (CfSt) denotes the opposite effect by not detecting null interactions (orange boxes). Overall, (CfSt) obtains the best results in terms of support recovery and sign estimations by committing only two errors. Table~\ref{tab:p_values_10} summarises the $p$-values for each hypothesis as described in Section~\ref{sec:goodness}. 
    All of the proposed methods obtain overall better $p$-values with no particularly low values, which is not the case for (Approx) (see $p_4$ and $p_{tot}$) and for (Lst-sq) (see $p_8$ and $p_{10}$). Although the $p$-values all exceed $5\%$, they remain substantially smaller than those obtained with the alternative methods.  
    
    \begin{table*}[!ht]
    \begin{center}
    \setlength{\tabcolsep}{5pt}
    \centering
    \begin{tabular}{c|cccccccccc|c}
         $p$-value & $p_1$ & $p_2$ & $p_3$ & $p_4$ & $p_5$ & $p_6$ & $p_7$ & $p_8$& $p_9$ & $p_{10}$ & $p_{tot}$\\
         \toprule
         True & 0.437 & 0.608 & 0.435 & 0.517 & 0.534 & 0.45 & 0.43 & 0.47 & 0.533 & 0.509 & 0.464\\
         \midrule
         MLE & 0.451 & 0.619 & 0.399 & 0.466 & 0.506 & 0.464 & 0.424 & 0.386 & 0.45 & 0.483 & 0.434\\
         \midrule
         MLE-0.05 & 0.454 & 0.622 & 0.392 & 0.466 & 0.505 & 0.427 & 0.418 & 0.392 & 0.495 & 0.482 & 0.432 \\
         CfE & 0.424 & 0.532 & 0.393 & 0.528 & 0.532 & 0.301 & 0.444 & 0.427 & 0.516 & 0.505 & 0.439\\
         CfSt & 0.452 & 0.633 & 0.375 & 0.474 & 0.527 & 0.462 & 0.431 & 0.422 & 0.488 & 0.493 & 0.465\\
         \midrule
         Approx & 0.411 & 0.376 & 0.475 & 0.077 & 0.485 & 0.411 & 0.3 & 0.384 & 0.285 & 0.436 & 0.085\\
         Lst-sq & 0.422 & 0.63 & 0.344 & 0.456 & 0.416 & 0.439 & 0.411 & 0.096 & 0.579 & 0.157 & 0.423\
    \end{tabular}    
    \caption{$p$-values for estimations of a ten-dimensional Hawkes process. The values are averaged over 25 simulations. $p_{tot}$ corresponds to testing whether the estimated intensity function corresponds to a multivariate Hawkes process $N$ as defined in Section \ref{sec:goodness}.}
    \label{tab:p_values_10}
    \end{center}
    \end{table*}

Finally, we compare the relative squared errors for each group of parameters (see Figure \ref{fig:errors_10_dim}). Similarly to the two-dimensional case, all proposed methods perform significantly better than alternative approaches regarding the estimation of all parameters.
In addition, it can be noticed that inaccurate estimations of the parameters $\alpha_{ij}$ tend to deteriorate the estimations of $\mu_i$, which suggests an effect of compensation between these parameters.
Regarding the estimator (CfSt), which shows the best averaged performance, it can be remarked that it also exhibits a large variance, in particular when estimating $\beta_i$. 

Figure~\ref{fig:p_values_support} illustrates for (CfST) the ordered $p$-values for hypothesis $\mathcal{H}_0 : \alpha_{ij} = 0$.

{\begin{figure}[!ht]
     \centering
     \includegraphics[width=0.75\linewidth]{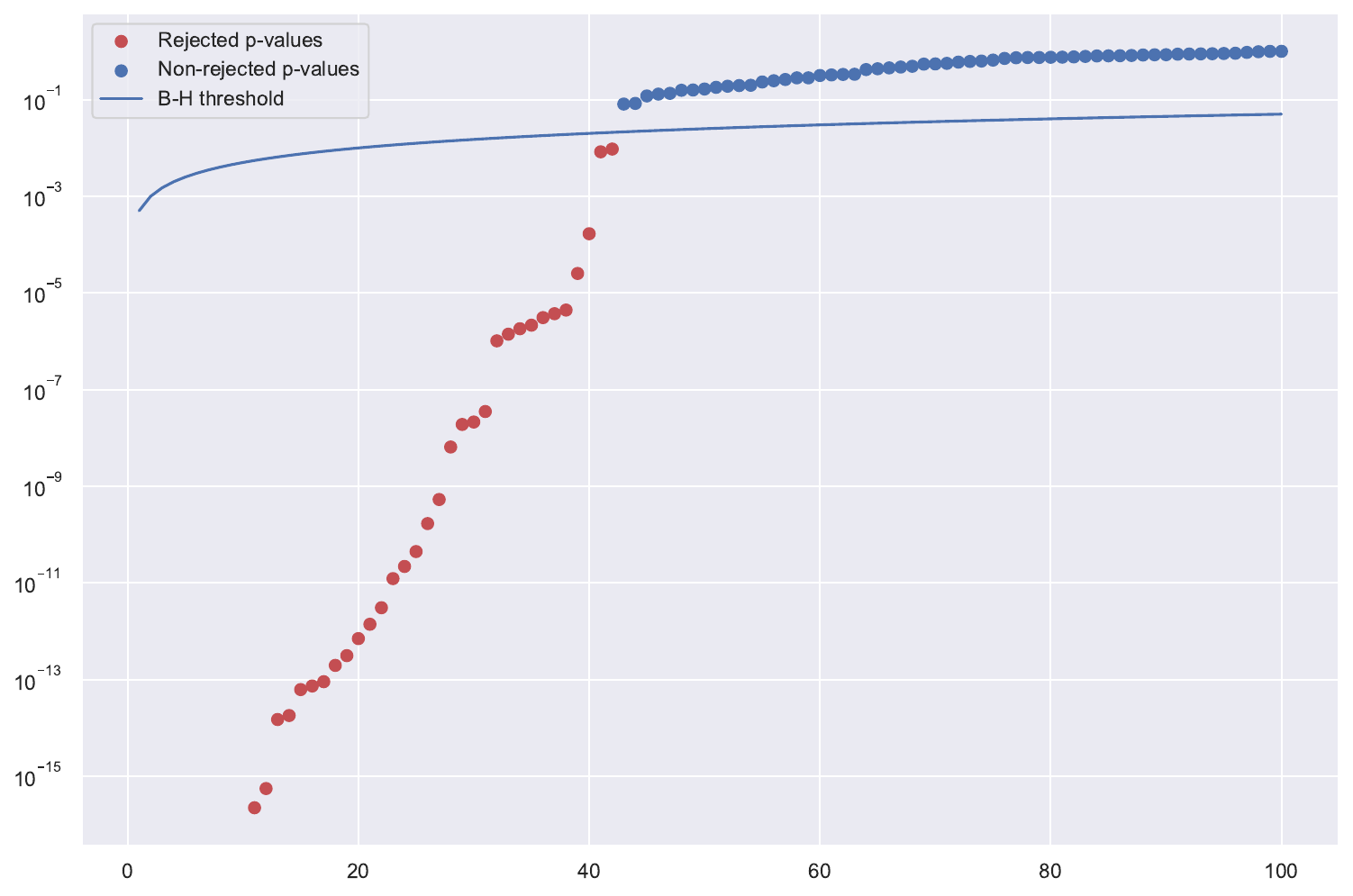}
     \caption{Ordered $p$-values corresponding to support estimation of method (CfSt). Red points correspond to rejected null hypothesis $\mathcal{H}_0 : \alpha_{ij} = 0$ and blue points to non-rejected ones.}
     \label{fig:p_values_support}
     \end{figure}}

This can be explained by this estimator providing a very sparse solution (as seen in Figure \ref{fig:heatmap_estimated}) and therefore taking into account less observations for estimating the coefficients $\beta_i$.

{\begin{figure*}[!ht]
     \centering
     \includegraphics[width=0.9\linewidth]{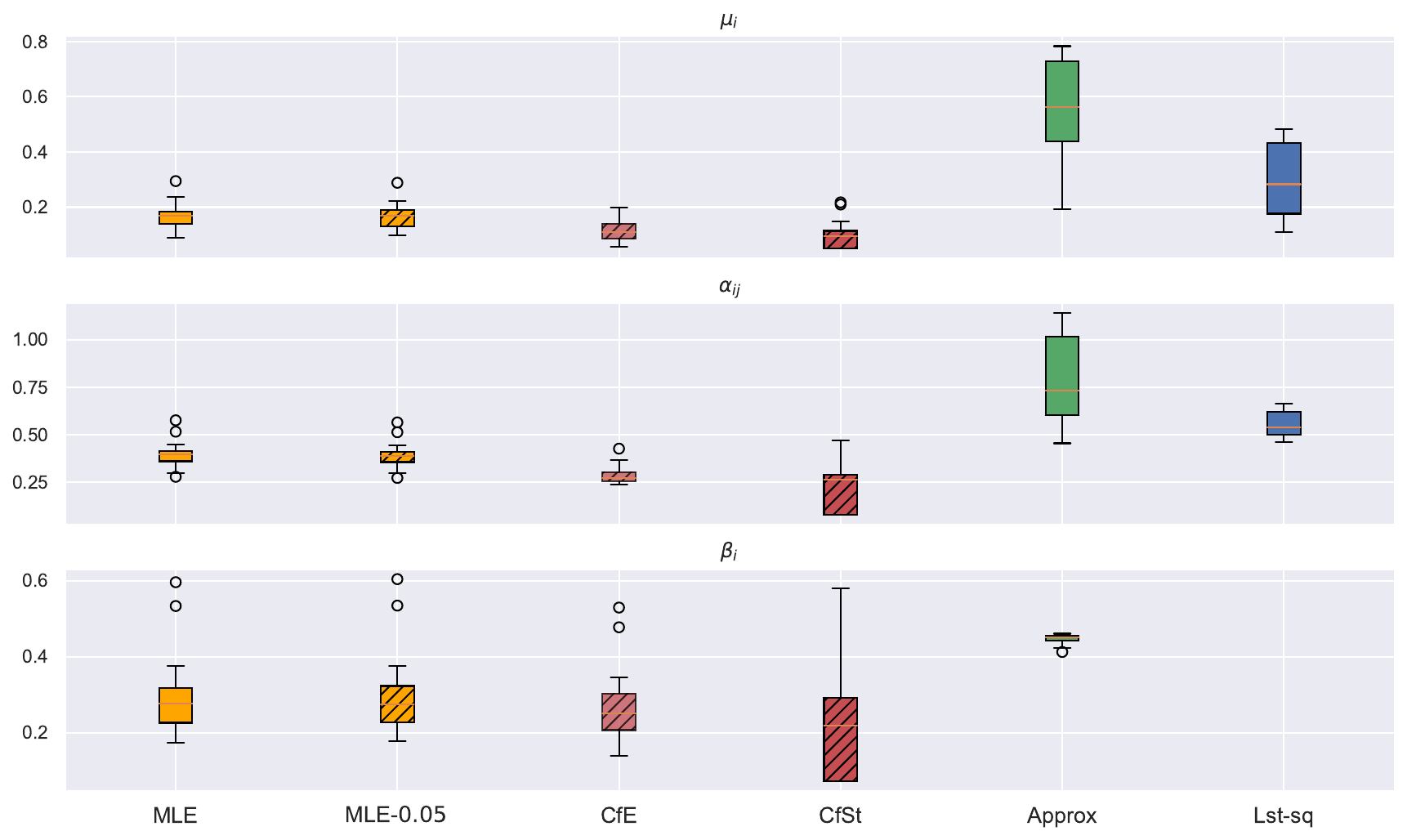}
     \caption{Boxplots of the relative squared error for each group of parameters ($(\mu_i)_i$, $(\alpha_{ij})_{i,j}$ and $(\beta_i)_i$) for 25 realisations of a ten-dimensional Hawkes processes.
    (Lst-sq) does not appear in the last row because it is provided with the true values of \((\beta_i)_i\).
    The proposed methods are (MLE), (MLE-$\varepsilon$), (CfE) and (CfSt).}
     \label{fig:errors_10_dim}
     \end{figure*}}

An important question for any inference method, especially in a high-dimensional setting, is its computational cost. Table~\ref{tab:times_10} shows the average estimation time (over 25 realisations), all times being total estimation time. More precisely, for our 3 model selection methods (MLE-$\varepsilon$), (CfE) and (CfSt), it takes into account the total times, including the first (MLE) estimation in addition to the re-estimation over the support. 

\begin{table*}[!ht] 
    \begin{center}
    \setlength{\tabcolsep}{2pt}
    \centering
    \begin{tabular}{c|c|ccc|c|cc}
          & MLE & MLE-$\varepsilon$ & CfE & CfSt & Approx & Lst-sq & Grid-lst-sq\\
         \toprule
         Computing time & 87.6 & 178.2 & 147.2 & 152.2 & 51.4 & 1.32 & 47.7
    \end{tabular}
    \caption{Average computing time in seconds for all estimation methods, averaged over 25 realisations of a ten-dimensional Hawkes process.  The time shown for (MLE-$\varepsilon$) is the total time of estimation for 7 different values of $\varepsilon$. Similarly for \texttt{tick} methods, the time is for 7 different levels of penalisation (as done in the estimations of Section~\ref{sec:dim2}). For the (Grid-lst-sq), we provided a search grid for $\beta_i$ that contains 12 values, including the true one.}
    \label{tab:times_10}
    \end{center}
    \end{table*}

Although the difference between (Lst-sq) and all other methods is substantial, let us recall that (Lst-sq) requires to be provided with parameters $\beta_i$, which offers two numerical advantages: it does not need to optimise for parameters $\beta_i$ which are the more difficult parameters to estimate and it includes a pre-computation step (of exponential terms) that accelerates all internal computations. In order to provide a fairer comparison, we include the computing time of the (Grid-lst-sq) method which can be considered as an alternative for estimating these parameters when given a search grid for $\beta_i$ (here with 12 values, including the true parameters).
As expected, the computational cost of the MLE method is higher than the alternative approaches (Approx) and (Grid-lst-sq), designed for the linear model.
It remains nevertheless that (Lst-sq) and (Grid-lst-sq) are both implemented in a compiled language (C++), which is always faster than an interpreted language such as that used in the proposed package (Python).
However, all computation times remain quite reasonable, even for the methods that include a selection model step. 

\subsection{Robustness on misspecified models}
In this section, we address the question of the robustness of our estimator regarding the misspecification of the kernel function. More precisely, we generate a two-dimensional Hawkes process with power-law kernels, which are commonly used in the literature \citep{Mishra2016, Ogata1988} as an alternative to the exponential kernel for modelling a slower convergence to zero. For all $i, j$ we define the power-law kernel as: \[h_{ij}(t) = \frac{\alpha_{ij}\beta_{ij}}{(1 + \beta_{ij}t)^{1 + \gamma}}\,,\] with $\beta_{ij} > 0$, $\gamma > 0$ and $\alpha_{ij}\in\RR$ in order to allow inhibition effects. Let us remark that in the general case each kernel function could be given a different parameter $\gamma_{ij}$ but here we fix the same parameter for all interactions, which is a similar condition as Assumption~\ref{assu:beta}.

We propose to investigate different scenarios in order to model different behaviours. In all cases, we set the values of $\alpha_{ij}$ in order to have both excitation and inhibition effects and the values of $\mu_i$ as follows:
\[
 \begin{pmatrix}
  \mu_{1} \\
  \mu_{2} 
  \end{pmatrix} =
  \begin{pmatrix}
  1.0 \\ 
  1.0
  \end{pmatrix}\,,
  \quad
  \begin{pmatrix}
  \alpha_{11} & \alpha_{12}\\
  \alpha_{21} & \alpha_{22}
  \end{pmatrix}=
  \begin{pmatrix}
  0.1 & 1.5\\
  1.0 & -0.5
  \end{pmatrix}\,.
\]

\begin{itemize}
    \item  Scenarios $\gamma$ : we set
    
    \[\begin{pmatrix}
  \beta_{11} & \beta_{12}\\
  \beta_{21} & \beta_{22}
  \end{pmatrix}=
  \begin{pmatrix}
  1.0 & 1.1\\
  1.2 & 1.0
  \end{pmatrix}\ ,
\]
    
all values being similar in order to be close to Assumption~\ref{assu:beta}. Then we study the effect of parameter $\gamma$ which controls how the kernel functions decrease to zero. We set \[\gamma\in\{2.0, 4.0, 6.0, 8.0\}\,.\]

    \item Scenario $\beta$ : we keep the same values of $\mu_i$, $\alpha_{ij}$ as in Scenarios $\gamma$, we fix $\gamma=4.0$ and we choose very different values of $\beta_{i,j}$:

 \[\begin{pmatrix}
  \beta_{11} & \beta_{12}\\
  \beta_{21} & \beta_{22}
  \end{pmatrix}=
  \begin{pmatrix}
  1.0 & 2.0\\
  0.1 & 1.0
  \end{pmatrix}\,,
\]

    so that Assumption~\ref{assu:beta} is violated and therefore the intensity function of process $N^2$ is frequently non-monotonous between two event times.

\end{itemize}

We expect that our estimator should adapt better to Scenarios $\gamma$ (in particular large values of $\gamma$ would correspond to a fast convergence to zero) than to Scenario $\beta$.

Figure~\ref{fig:heatmap_powerlaw} represents the heatmap for $\sign(\alpha_{ij})\| h_{ij} \|_1 = \alpha_{ij}/\gamma$ and $\sign(\tilde \alpha_{ij})\| \tilde h_{ij}\|_1 = \tilde \alpha_{ij}/ \tilde\beta_{i}$ as well as whether the type of each interaction is correctly estimated (excitation or inhibition). We first notice that in most cases, our method is robust enough to differentiate between exciting and inhibiting interactions, the only errors concerning parameter $\alpha_{11}$ that is close to zero. For Scenarios $\gamma$, we can observe as expected that the bigger differences are obtained for smaller values of $\gamma$. Although the signs of the interactions are correctly estimated in Scenario $\beta$, we can observe substantial errors regarding the estimations of both $\alpha_{21}$ and $\alpha_{22}$. 

{\begin{figure*}[!ht]
     \centering
     \includegraphics[width=0.9\linewidth]{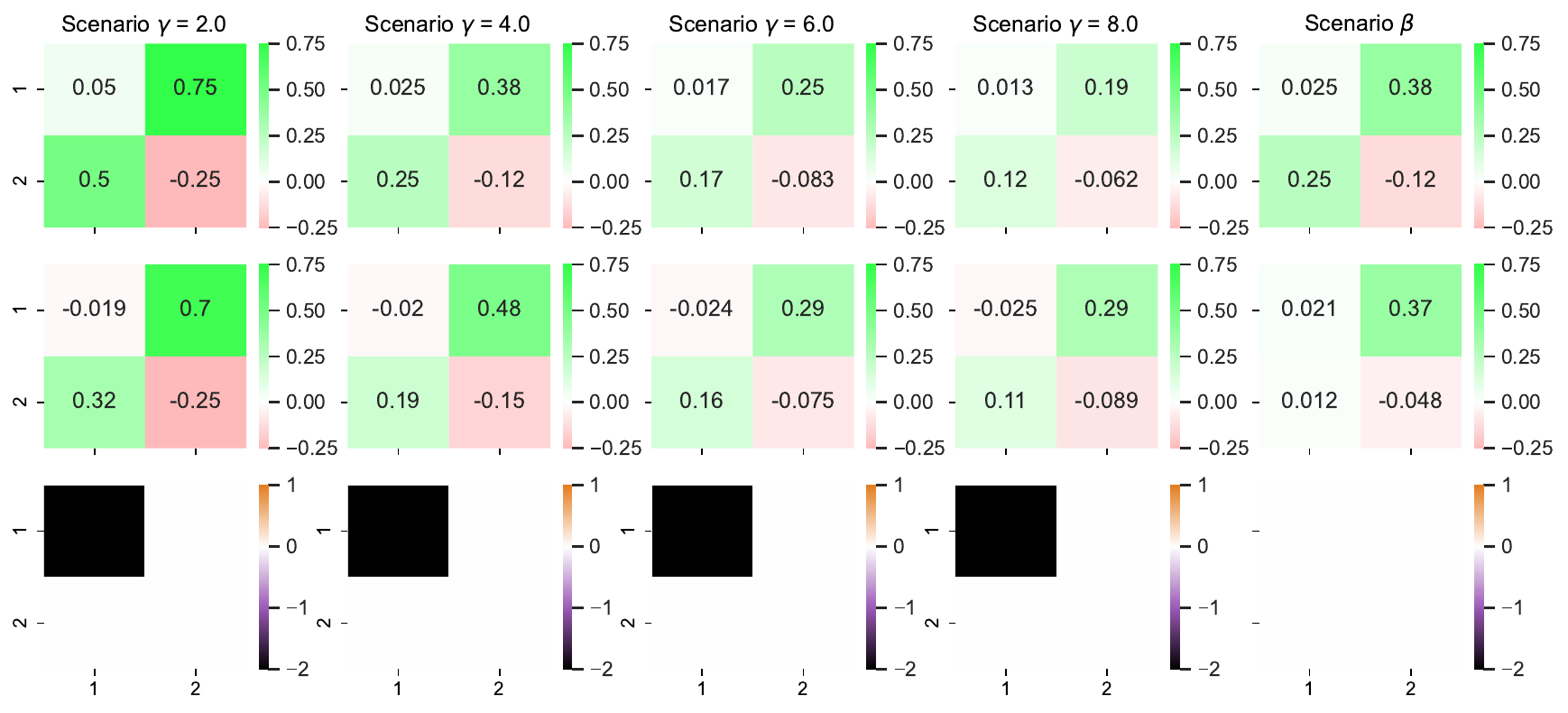}
     \caption{Top row corresponds to the heatmap $\sign(\alpha_{ij})\| h_{ij} \|_1 = \alpha_{ij}/\gamma$. Middle row corresponds to the estimated heatmap $\sign(\tilde \alpha_{ij})\| \tilde h_{ij}\|_1 = \tilde \alpha_{ij}/ \tilde\beta_{i}$. Bottom row corresponds to errors when estimating the nature of the interaction. A black box represents a value $\alpha_{ij}$ whose sign is wrongly estimated.}
     \label{fig:heatmap_powerlaw}
     \end{figure*}}

Table~\ref{tab:p_values_powerlaw} shows the average $p$-values associated with the goodness-of-fit measure presented in Section~\ref{sec:goodness}. It confirms that for all Scenarios $\gamma$, the $p$-values are smaller for smaller values of $\gamma$, all $p$-values remaining greater than 5\%. However, the $p$-values associated with Scenario $\beta$ are always equal to zero, which means that the goodness-of-fit is able to detect that the interactions are not correctly estimated. 

\begin{table}[!ht]
    \begin{center}
    \centering
    \begin{tabular}{c|c|ccc}
         \multicolumn{2}{c}{} & $p_1$ & $p_2$ & $p_{tot}$\\
         \toprule
         \multirow{5}{*}{Scenario $\gamma$} & $\gamma = 2.0$ & 0.140 & 0.275 & 0.068\\
         & $\gamma = 4.0$ & 0.484 & 0.536 & 0.482\\
         & $\gamma = 6.0$ & 0.381 & 0.503 & 0.408\\
         & $\gamma = 8.0$ & 0.408 & 0.506 & 0.330\\
         \midrule
         \multicolumn{2}{c|}{Scenario $\beta$} & \textbf{0.0} & \textbf{0.0} & \textbf{0.0}
    \end{tabular}
    \caption{Average $p$-values for estimations of two-dimensional Hawkes processes for all scenarios in the misspecified power-law model. The values are averaged over 25 simulations. In bold the $p$-values correspond to a rejected hypothesis at a confidence level of $0.95$.}
    \label{tab:p_values_powerlaw}
    \end{center}
    \end{table}

To conclude, if the true model is not too far from an exponential model and Assumption~\ref{assu:beta} holds, our procedure can adapt and provide reasonable estimations. If not, our estimator cannot adjust but we are able to detect incorrect estimations thanks to the goodness-of-fit procedure.

\section{Application on neuronal data}\label{sec:neuron}

\subsection{Preprocessing and data description}

In this section we present the results obtained by our estimation method applied to a collection of 10 trials consisting in the measurement of spike trains of 223 neurons from the lumbar spinal of a red-eared turtle. This data are first presented in  \cite{Peterson2016} and then also analysed in \cite{neuro2022} to study how the activity of a group of neurons impacts the membrane potential's dynamic of another neuron. In particular, recovering the connectivity graph allows to isolate a subnetwork which activity impacts the dynamics of one given neuron.  Events were registered for 40 seconds and in order to take into account eventual stationarity we only consider the events that took place on the interval $[11, 24]$ (see \cite{neuro2022} for further details).  Among all trials, each neuron recording contains between 54 and 4621 event times. Furthermore, we divide our samples in a training set consisting on all events in half the interval $[11, 17.5]$ and a test set consisting on the remaining window $[17.5, 24]$, in particular each neuron has at least 15 event times in each set. The training sets are used for obtaining the estimations and the test sets for performing the goodness-of-fit tests. 

\subsection{Resampling}

As only ten realisations are available, this can obviously limit the performance of both confidence intervals methods. In order to counter this problem, we perform a resampling method obtained as follows:
\begin{enumerate}
	\item We sample 3 realisations at random $(N_1, N_2, N_3)$, without replacement and by taking the order into account. From now on, we consider that each realisation takes place in the time interval $[0,6.5]$ (instead of $[11, 17.5]$).
	\item We cumulate all 3 realisations by considering that process $N_1$ takes place in $[0, 6.5]$ then process $N_2$ in $[6.5, 13]$ and finally $N_3$ in $[13, 19.5]$. This creates a single realisation $N$ in the interval $[0, 19.5]$. 
\end{enumerate}

This approach is proposed in \cite[Section 3.4]{Reynaud2014}. In our case, we repeat this process 20 times to obtain another sample of realisations. These new sample will be used for both the (MLE) method (presented as (resampled-MLE)) and for both (CfE) and (CfSt).

\subsection{Goodness-of-fit and multiple testing procedure}

Since we are dealing with a high-dimensional setting, it is crucial to account for a multiple testing correction which is performed through the Benjamini-Hochberg procedure as described in Section \ref{sec:support}. In this case, the adapted rejection threshold corresponds to $\frac{0.05 k}{d+1}$ and represented in Figure \ref{fig:p_values_neurons} by a blue line.
This is particularly useful in order to determine the best value of $\varepsilon$ for (MLE-$\varepsilon$) as we do not have prior knowledge regarding the sparsity of the neuronal connections.

{\begin{figure*}[!ht]
\centering
\includegraphics[width=0.9\linewidth]{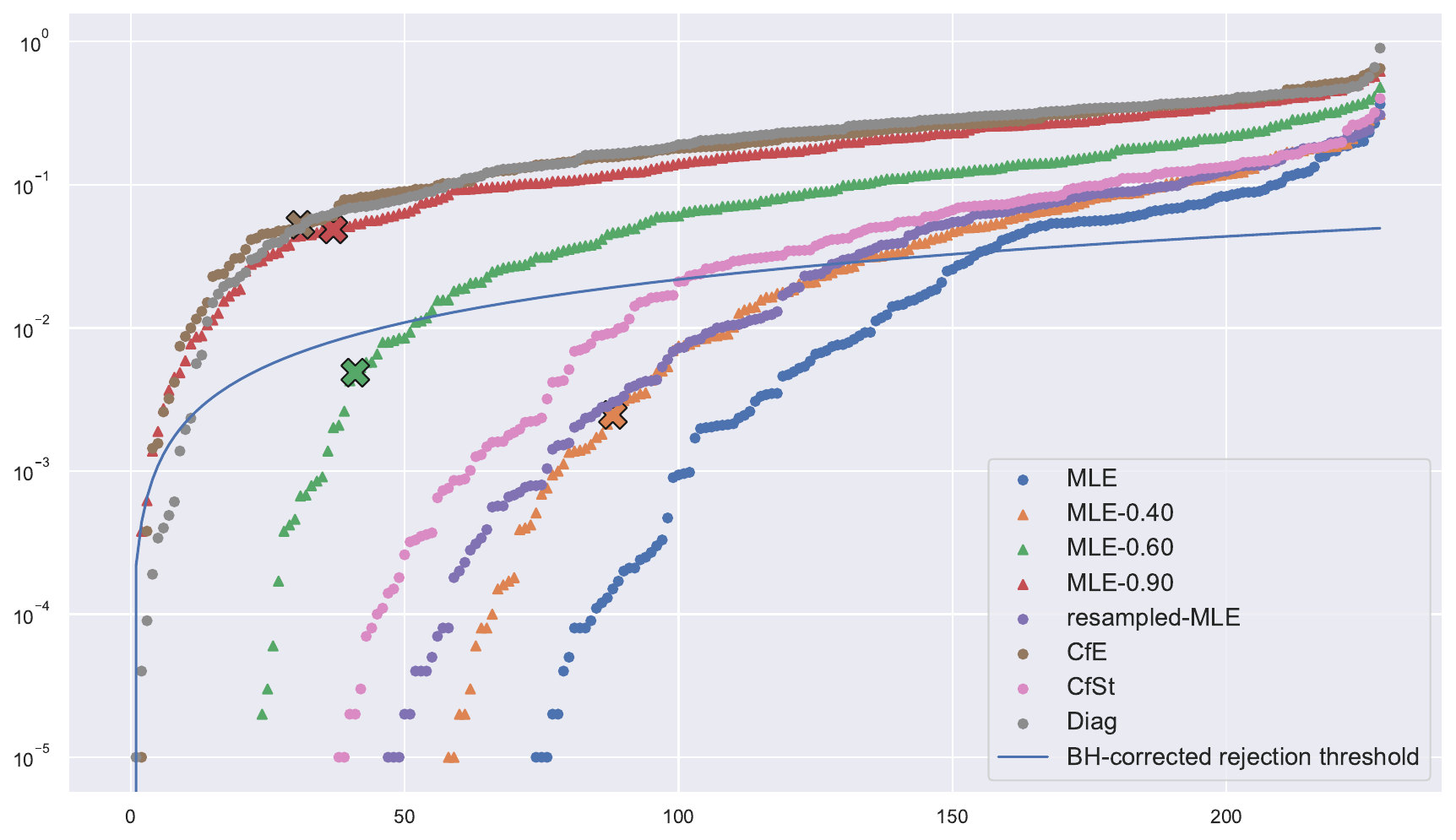}
\caption{Ordered $p$-values for all hypothesis tests $\mathcal{H}_i$ and $\mathcal{H}_{tot}$. $p_{tot}$ appears as a cross for each model (if there appears no cross for one given method, it means that the corresponding $p_{tot}$ is equal to zero). The blue curve corresponds to the adapted rejection threshold from the B-H procedure, so all tests whose $p$-value are under the line are rejected.}
\label{fig:p_values_neurons}
\end{figure*}}

 Figure~\ref{fig:p_values_neurons} shows the ordered $p$-values for each hypothesis $\mathcal{H}_i$ along with hypothesis $\mathcal{H}_{tot}$ displayed with a bold cross, all total $p$-values also being summarised in Table~\ref{tab:p_tot_neurons}. 
    
    \begin{table*}[!ht] 
    \begin{center}
    \setlength{\tabcolsep}{2pt}
    \centering
    \begin{tabular}{c|c|ccc|ccc|c}
          & MLE & MLE-$0.40$ & MLE-$0.60$ & MLE-$0.90$ & resampled-MLE & CfE & CfSt & Diag\\
         \toprule
         $p_{tot}$ & 0.0 & 0.002 & 0.004 & \textbf{0.048} & 0.0 & \textbf{0.053} & 0.0 & 0.0
    \end{tabular}
    \caption{Values of $p_{tot}$ for each estimation method for the neuronal dataset. In bold appear the \(p\)-values above the rejection threshold after Benjamini-Hochberg procedure.}
    \label{tab:p_tot_neurons}
    \end{center}
    \end{table*}
    
 We first note that for most methods, the $p$-values $p_{tot}$ associated with the  $\mathcal{H}_{tot}$ hypothesis are either equal to 0 or under the rejection threshold. In particular, it is the case for the MLE-$\varepsilon$ approach for small values of $\varepsilon$ (i.e. weak sparsity scenarios) but as we increase the threshold $\varepsilon$, the $p$-values appear to increase, with the best estimation being achieved for $\varepsilon = 0.90$.
 
 This suggests that the simpler the model the better $p$-values we obtain so we decided to include another approach, named ``Diag'', consisting in setting all $\alpha_{ij} = 0$ for $i \neq j$. This corresponds to a model where there exists no interaction between neurons and we keep only self-interactions: in other words, each neuron is seen as a univariate Hawkes process with three parameters $(\mu_i,\alpha_{ii}, \beta_i)$. Although most hypotheses $\mathcal{H}_i$ are not rejected by the method, the total $p$-value $p_{tot}$ is zero which suggests that although such a model could explain each dimension individually, it is unable to explain the neurons' interactions as a whole interconnected process.
Although the total $p$-values are generally quite low, they remain above the rejection threshold for (MLE-$0.90$) and (CfE).
Let us notice that the high value of the best threshold (0.90) is consistent with the (CfE) estimator that provides an even sparser solution with $4.26\%$ non-null interactions. 

Let us also recall that the (CfSt) method relies on the assumption that the MLE estimator is asymptotically normal. Here we have a low number of repetitions ($10$ trials) in a high-dimensional setting where some neurons are rarely observed which could explain why this estimator does not perform well. Moreover, in this context, a Kolmogorov-Smirnov test of normality is likely to provide high $p$-values for such small-sized samples even for non-normal distributions.
 
 Finally, the model that best describes the complete process $N$ is (CfE) with the highest value for $p_{tot}$ and with almost all hypotheses, including $\mathcal{H}_{tot}$, not rejected. This suggests indeed that the estimations provided by (CfE) are the best fit for explaining the entire process as well as each individual subprocess.

\subsection{Estimation results} \label{sec:comment_neuron}
Figure~\ref{fig:heatmap_CfE} illustrates the obtained estimation for all parameters for the (CfE) method. 
Let us recall that the estimation for (CfE) is obtained by using the resampled trials: the support is determined by using the empirical quantiles confidence intervals after an estimation through (MLE) and then all parameters are re-estimated over each trial. A single estimation is obtained by averaging over all trials.

Although the heatmap matrix corresponding to $(\sign(\tilde\alpha_{ij}))_{ij}$ contains only $4.26\%$ of non-null entries, there remain many significant interactions. Interestingly, among them we detect all types of interactions: mutual excitation, mutual inhibition, self-excitation, self-inhibition. This supports the relevance of carefully accounting for inhibition when developing inference procedures.

We also notice that the diagonal contains mostly non-null entries (all but $6$), which highlights the major effect of self-interactions, among which some are negative and some are positive. Although it is possible that different neurons actually show different patterns, some being self-exciting and other self-inhibiting, there exists another hypothesis.  We might indeed observe a combination of effects from which we cannot distinguish: on the one hand, a self-exciting behaviour and on the other hand, a refractory period following a spike during which a neuron cannot spike again. This could also explain why the order of magnitude of the $\beta_i$ estimations, which describe the duration until an effect vanishes, is different from a neuron to another. It would be of great interest to propose another modelling that could account for both effects and thereby helping us to provide additional information to support or refute this hypothesis.

Another striking phenomenon is the behaviour of neuron $13$, which seems to interact with many other neurons: it contains indeed $69\%$ non-null receiving interactions (row) and $57\%$ non-null giving interactions (column). Further analysis shows that this neuron spikes only in one out of ten trials so that it could indicate an inaccurate estimation. However, the $p$-value associated with this neuron's subprocess is not rejected by our goodness-of-fit procedure, which suggests that the corresponding estimation is actually accurate. Therefore, this neuron could either play central role among the whole network or be connected to an unobserved neuron with a central role. On the opposite side, some neurons exhibit only a few connections, in particular there is one neuron that only receives interactions without giving, while another one gives without receiving.

{\begin{figure*}[!ht]
\centering
\includegraphics[width=0.8\linewidth]{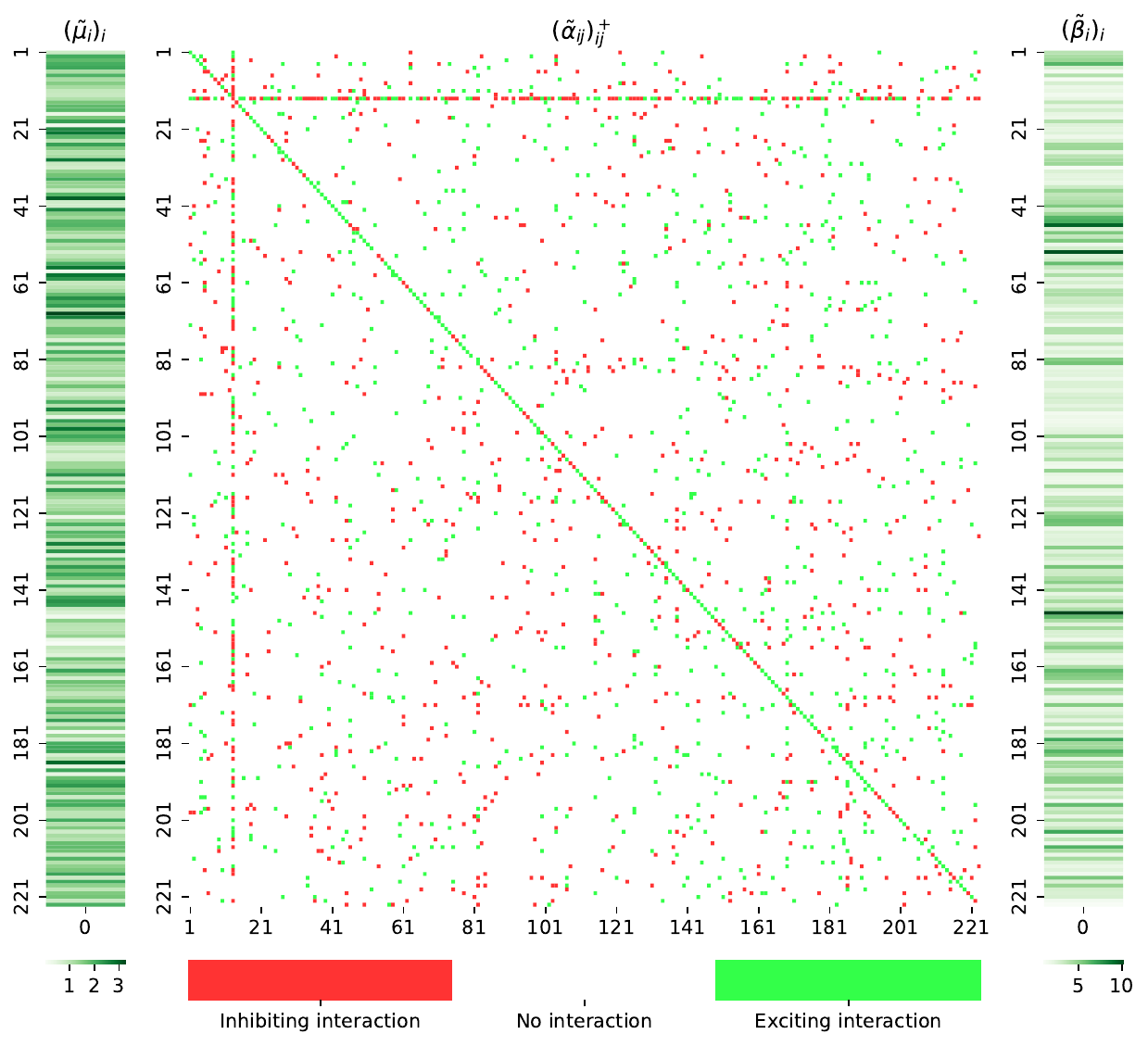}
\caption{Heatmap $(\sign(\alpha_{ij}))_{ij}$ of (CfE) estimation on 223 neurons.}
\label{fig:heatmap_CfE}
\end{figure*}}

\section{Discussion}
\label{sec:discussion}
In this paper, we proposed a methodology for estimating the parameters of multivariate exponential Hawkes processes with both exciting and inhibiting effects. Our first contribution was to provide a few sufficient conditions to ensure the identifiability of a commonly used model. Then we developed and implemented a maximum likelihood estimator combined with a variable selection procedure that enables to detect the significant interactions inside the whole process.
While our framework is more general than the usual linear Hawkes model, there remain two main limitations, the first one being the exponential distribution of the kernels, the second one the assumption that the delay factors \(\beta_{ij}\)
  only depend on the receiving process \(N^i\). If it is essential to assume a parametric form for the kernel functions in order to maintain the concepts of our approach, it would be of great interest to consider some extensions to account for potential combined effects, as already mentioned in Section \ref{sec:comment_neuron}. It would be notably relevant to include multi-scale effects or to consider a potential lag between an event time and its actual impact.  Regarding the assumption on the delay factors, while it is quite standard, it could be a limitation of our approach when considering heterogeneous phenomena.

  Going over this assumption would lead us to explore numerical integration methods and would considerably increase the computational time of the estimation procedure.
  This is obviously detrimental since, in practice, time sequences are increasingly abundant and large.
  On the other hand, improving the computational effectiveness of estimation procedures for Hawkes processes is a current direction of research \citep{Bompaire2018bis}.

  Our work focuses on the computational aspects of both maximum likelihood estimation and variable selection. It is of natural interest to provide further theoretical study of the asymptotic behaviour of our estimator, as done for exciting Hawkes processes \citep{Guo2018}.
  This work is currently under investigation.

  Let us also highlight that, because of the physical constraints of the experiment, only a fraction of the neuronal network is observed, which raises the question of interpretability of the estimated interactions.
  Indeed, the latter do not take into account the interactions with neurons that are outside the observed network.
  Very recent results tackle the consistency of estimated interactions in a partially observed network \citep{Reynaud2022}.
  A necessary condition to recover interactions in the subnetwork requires in particular to have a large number of interactions within the full network.
  Regarding the neuronal application, it could be of great interest to further investigate the interpretability of the inferred interactions and connectivity graph in light of the aforementioned work.

\bibliography{bibliography}

\pagebreak

\appendix

\section{Proof of Lemma~\ref{lemma:restart_times}} \label{appendix:proof_lemma}
  In order to prove Lemma~\ref{lemma:restart_times}, let us first state a preliminary result.
  
  \begin{lemma}\label{lemma:intensity_interval_form}
      If Assumption~\ref{assu:beta} is granted, then for each $i\in\{1,\ldots, d\}$ and any $k\geq 1$:
      \begin{gather*}
          \forall t \in [T\park, T\park[k+1]),
          \\
          \lambda^{i\star}(t) = \mu_i + \left(\lambda^{i\star}(T\park\right) - \mu_i)\mathrm{e}^{-\beta_i(t-T\park)}\,.
      \end{gather*}
    \end{lemma}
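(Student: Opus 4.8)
The plan is to start from the explicit expression of the underlying intensity under Assumption~\ref{assu:beta} and exploit the fact that all decay rates are equal to factor out the time dependence. Writing out the definition with $\beta_{ij} = \beta_i$ gives
\[
\lambda^{i\star}(t) = \mu_i + \sum_{j=1}^{d} \sum_{T_\ell^j \leq t} \alpha_{ij}\mathrm{e}^{-\beta_i(t - T_\ell^j)}\,,
\]
where the inner sum runs over all event times of subprocess $N^j$ not exceeding $t$.

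First I would fix $t \in [T\park, T\park[k+1])$ and observe that, since the $(T\park)_k$ are the ordered event times of the aggregated process $N$, no event of any dimension occurs in the interval $(T\park, t]$. Hence the collection of event times contributing to $\lambda^{i\star}(t)$ coincides exactly with those not exceeding $T\park$, so the summation range may be replaced by $\{T_\ell^j \leq T\park\}$ without changing the value.

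The key step is then the factorisation $\mathrm{e}^{-\beta_i(t - T_\ell^j)} = \mathrm{e}^{-\beta_i(t - T\park)}\,\mathrm{e}^{-\beta_i(T\park - T_\ell^j)}$, which is valid precisely because the decay rate $\beta_i$ is common to all $j$ under Assumption~\ref{assu:beta}. Pulling the factor $\mathrm{e}^{-\beta_i(t - T\park)}$ out of the double sum yields
\[
\lambda^{i\star}(t) = \mu_i + \mathrm{e}^{-\beta_i(t - T\park)} \sum_{j=1}^{d} \sum_{T_\ell^j \leq T\park} \alpha_{ij}\mathrm{e}^{-\beta_i(T\park - T_\ell^j)}\,.
\]
Recognising the remaining double sum as $\lambda^{i\star}(T\park) - \mu_i$ (obtained by evaluating the displayed expression for $\lambda^{i\star}$ at $t = T\park$) gives the claimed identity.

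I expect no genuine obstacle: the argument is a direct computation, and the only point requiring care is the bookkeeping of which event times contribute on $[T\park, T\park[k+1])$, together with checking that evaluating at the left endpoint $t = T\park$ reproduces the post-jump value consistently. It is worth emphasising that the whole argument hinges on Assumption~\ref{assu:beta}: were the rates $\beta_{ij}$ allowed to differ across $j$, each summand would carry its own exponential and no common factor could be extracted, which is exactly the source of the non-monotonicity difficulty highlighted in the main text and which this lemma circumvents.
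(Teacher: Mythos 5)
Your proof is correct, and it takes a genuinely different route from the paper's. The paper does not factor the exponentials: it differentiates the explicit expression of $\lambda^{i\star}$ on $[T_{(k)}, T_{(k+1)})$, uses Assumption~\ref{assu:beta} to recognise the linear first-order ODE $(\lambda^{i\star})'(t) = -\beta_i\left(\lambda^{i\star}(t) - \mu_i\right)$, and solves it on the interval. Your argument instead pulls the common factor $\mathrm{e}^{-\beta_i(t - T_{(k)})}$ out of the double sum and identifies the remainder as $\lambda^{i\star}(T_{(k)}) - \mu_i$; this is purely algebraic, avoids any appeal to differentiability or ODE uniqueness, and your bookkeeping of the summation range (no events in $(T_{(k)}, t]$, and the left-endpoint evaluation including the jump at $T_{(k)}$ itself) is exactly right. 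What the paper's ODE formulation buys is a statement that is immediately reusable downstream: the identity $(\lambda^{i\star})' = -\beta_i(\lambda^{i\star} - \mu_i)$ is invoked again verbatim in the proof of Theorem~\ref{theorem:identifiability} to identify $\beta_i$, and the observation that the derivative has sign opposite to $\lambda^{i\star}(T_{(k)}) - \mu_i$ is the starting point of the case analysis in Lemma~\ref{lemma:restart_times}. Both of these facts also follow from your closed form, so nothing is lost; the two derivations are interchangeable here.
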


    \begin{proof}
      Let $i\in\{1,\ldots, d\}$. For any $k\geq 1$, the underlying intensity function $\lambda^{i\star}$ in the interval $[T\park, T\park[k+1])$ can be written:
      \[\lambda^{i\star}(t) = \mu_i + \sum_{j=1}^{d}{\sum_{\ell=1}^{N^j(t)}{\alpha_{ij}\mathrm{e}^{-\beta_{ij}(t-T_\ell^j)}}}\,.\]
      This function is differentiable in the open interval $(T\park, T\park[k+1])$ and we obtain:
      \[(\lambda^{i\star})'(t) = -\sum_{j=1}^{d}{\beta_{ij}\sum_{\ell=1}^{N^j(t)}{\alpha_{ij}\mathrm{e}^{-\beta_{ij}(t-T_\ell^j)}}}\,.\]

      By using Assumption~\ref{assu:beta} that for all $j\in\{1,\ldots, d\}$, $\beta_{ij} = \beta_i\in\RR_+^*$, we obtain the following differential equation:
      \[(\lambda^{i\star})'(t) = -\beta_i\left(\lambda^{i\star}(t) - \mu_i\right)\,,\] which by solving on the interval gives:
      \begin{equation*}
          \lambda^{i\star}(t) = \mu_i + \left(\lambda^{i\star}(T\park) - \mu_i\right)\mathrm{e}^{-\beta_i(t-T\park)}\,.
      \end{equation*}
    \end{proof}
    
    \begin{proof}[Proof of Lemma~\ref{lemma:restart_times}]
      Let $i\in\{1,\ldots, d\}$ and $k\geq 1$.
      By Lemma~\ref{lemma:intensity_interval_form}, $\forall t \in [T\park, T\park[k+1])$:
      \begin{equation*}
          \lambda^{i\star}(t) = \mu_i + \left(\lambda^{i\star}(T\park) - \mu_i\right)\mathrm{e}^{-\beta_i(t-T\park)}\,.
      \end{equation*}

      In particular, the derivative of the underlying intensity function is of opposite sign as $(\lambda^{i\star}(T\park) - \mu_i)$.
      Let us distinguish two cases,
      referring to Figure~\ref{fig:lemma} for a better understanding:
      
      {\begin{figure*}[!ht]
      \centering
      \includegraphics[width=\linewidth]{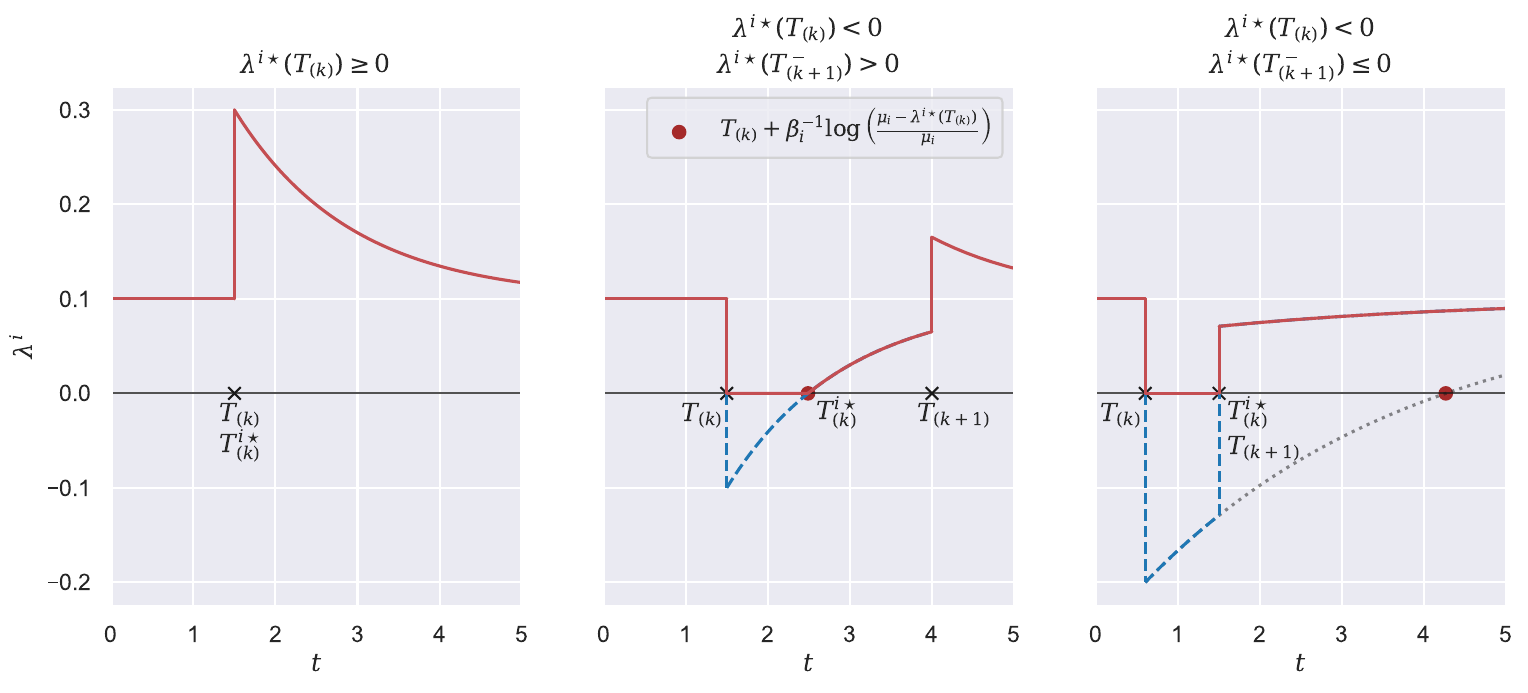}
      \caption{Illustration of three possible scenarios for restart times $T\park^{i\star}$ depending on the sign of $\lambda^{i\star}(T\park)$ and $\lambda^{i\star}(T\park[k+1]^-) = \lim_{t\to T\park[k+1]^-}{\lambda^{i\star}(t)}$. The dotted line in the last scenario shows the equation $\mu_i + (\lambda^{i\star}(T\park) - \mu_i)\mathrm{e}^{-\beta_i(t-T\park)} = 0$ and the term $T\park + \beta_i^{-1}\log{\adaptedpar{\frac{\mu_i - \lambda^{i\star}(T\park)}{\mu_i}}}$ as its only root.}
      \label{fig:lemma}
      \end{figure*}}

      \begin{itemize}
          \item If $\lambda^{i\star}(T\park) \geq 0$, then,
          \begin{align*}
            T^{i\star}\park = T\park &= \min{\adaptedpar{T\park, T\park[k+1]}} \\
            &= \min{\adaptedpar{t^\star_k, T\park[k+1]}}\,.
          \end{align*}

          If $(\lambda^{i\star}(T\park) - \mu_i) \geq 0$, then $\lambda^{i\star}$ is decreasing and lower-bounded by $\mu_i$. If $(\lambda^{i\star}(T\park) - \mu_i) < 0$ then $\lambda^{i\star}$ is increasing and lower-bounded by zero. In both cases, for any $t\in(T^{i\star}\park, T\park[k+1])$, $\lambda^{i\star}(t) > 0$ and then $\lambda^i(t) = \lambda^{i\star}(t)$.

          \item If $\lambda^{i\star}(T\park) < 0$, then $\left(\lambda^{i\star}\left(T\park\right) - \mu_i\right) < 0$ so $\lambda^{i\star}$ is strictly increasing and by continuity and by Lemma~\ref{lemma:intensity_interval_form}, there exists a unique $t^\star > T\park$ such that $\mu_i + \left(\lambda^{i\star}(T\park\right) - \mu_i)\mathrm{e}^{-\beta_i(t^\star-T\park)} = 0$.

              We obtain:
              \[t^\star = T\park + \beta_i^{-1}\log{\adaptedpar{\frac{\mu_i - \lambda^{i\star}(T\park)}{\mu_i}}}\,.\] By denoting $\lambda^{i\star}(T\park[k+1]^-) := \lim_{t\to T\park[k+1]^-}{\lambda^{i\star}(t)}$:

          \begin{itemize}
              \item If $\lambda^{i\star}(T\park[k+1]^-) > 0$, then $t^\star < T\park[k+1]$ by strict increasingness and so by definition $T^{i\star}\park = t^\star = t^\star_k$.
              Lastly, for any $t\in(T\park, T\park[k+1])$,
              if $t\in(T\park, T^{i\star}\park]$, $\lambda^{i\star}(t) \leq 0$ and then $\lambda^i(t) = 0$,
              while if $t\in(T^{i\star}\park, T\park[k+1])$, $\lambda^{i\star}(t) > 0$ and then $\lambda^i(t) = \lambda^{i\star}(t)$.
              
              \item If $\lambda^{i\star}(T\park[k+1]^-) \leq 0$, then by strict increasingness $t^\star > T\park[k+1]$ and so $T\park^{i\star} = T\park[k+1]$.
              In this case, for all $t\in(T\park, T\park[k+1])$,
              $\lambda^{i\star}(t) < 0$ so $\lambda^i(t)=0$.
              Moreover, $(T^{i\star}\park, T\park[k+1]) = \emptyset$ so we never have $\lambda^i(t) = \lambda^{i\star}(t)$.
          \end{itemize}

      \end{itemize}

      Combining all scenarios achieves the proof.
    \end{proof}
    
\section{Proof of Proposition~\ref{prop:compensator}} \label{appendix:proof_prop_compensator}
    
    \begin{proof}
      For each $i\in\{1,\ldots, d\}$ and \(\forall t\geq 0\), with convention \(T\park[0] = 0\) and \(T\park[0]^{i\star} = 0\):
      \begin{align*}
        \Lambda^i(t)
        &= \int_{0}^{t}{\lambda^i(u)\,\mathrm{d}u}
        = \sum_{k=0}^{N(t)} \int_{T\park}^{T\park[k+1]}{\lambda^i(u) \II_{u \le t} \,\mathrm{d}u}\\
        &= \sum_{k=0}^{N(t)} \int_{T\park^{i\star}}^{T\park[k+1]} \lambda^{i\star}(u) \II_{u \le t} \, \mathrm{d}u \,,
      \end{align*}
      where the last equation comes from Lemma~\ref{lemma:restart_times}.
      Then, for \(k=0\):
      \begin{align*}
        \int_{T\park[0]^{i\star}}^{T\park[1]} \lambda^{i\star}(u) \II_{u \le t} \, \mathrm{d}u
        &= \int_{T\park[0]^{i\star}}^{T\park[1]} \mu^i \II_{u \le t} \, \mathrm{d}u\\
        &= \mu_i \min(t, T\park[1]) \,,
      \end{align*}
      and for every \(k \in \{1, \dots, N(t)\}\), by Lemma~\ref{lemma:intensity_interval_form}:
      \begin{align*}
        &\int_{T\park^{i\star}}^{T\park[k+1]} \lambda^{i\star}(u) \II_{u \le t} \, \mathrm{d}u \\
        = &\int_{T\park^{i\star}}^{T\park[k+1]} \left[ \mu_i + \left(\lambda^{i\star}(T\park\right) - \mu_i)\mathrm{e}^{-\beta_i(t-T\park)} \right] \II_{u \le t} \, \mathrm{d}u \\
        = &\mu_i \left(\min(t, T\park[k+1]) - T\park^{i\star}\right)\\ &+ \beta_i^{-1} \left(\lambda^{i\star}(T\park) - \mu_i \right) \\
        &\left(\mathrm{e}^{-\beta_i(T\park^{i\star} - T\park)} - \mathrm{e}^{-\beta_i(\min(t, T\park[k+1])- T\park)} \right)\,.
      \end{align*}
    \end{proof}

\section{Proof of Theorem~\ref{theorem:identifiability}}\label{appendix:proof}

\begin{proof}
    We only need to prove that if $\lambda_{\theta_i}^i(t \mid \mathcal{F}_t) = \lambda_{\theta'_i}^i(t \mid \mathcal{F}_t)$ a.e. for every $i\in\{1, \ldots, d\}$ then $\theta = \theta'$. In this proof, both intensities are considered with respect to the same filtration $\mathcal{F}_t$ so we will omit it from the rest of the proof.
    
    Let $\theta, \theta'\in \Theta$. Let us assume that $\lambda_{\theta_i}^i(t) = \lambda_{\theta'_i}^i(t)$ for all $t < T$. In order to prove equality between the two parameters we will first prove that $\mu_i = {\mu_i}'$, then $\beta_i = \beta_i'$ and lastly that $\alpha_{ij} = \alpha_{ij}'$ for every $i,j$. 
    
    \begin{itemize}
        \item For any $i$, as $\lambda_{\theta_i}^i(t) = \lambda_{\theta'_i}^i(t)$ a.e., then 
        \begin{alignat*}{2}
            &\quad& \Lambda_{\theta_i}^i(T\park[1]) &= \Lambda_{\theta'_i}(T\park[1])\\
            \implies && \mu_i T\park[1] &= {\mu_i}' T\park[1]\\
            \implies && \mu_i &= {\mu_i}'\,.
        \end{alignat*}
        
        \item For any $i$, let us choose $T\park[k+1]$ such that it is an event of process $N^i$. 
        As \[\PP(T\park[k+1] \text{ is an event of }N^i \mid \mathcal{F}_{T\park[k+1]^-}) = \frac{\lambda^i_{\theta_i}(T\park[k+1]^-)}{\lambda_{\theta}(T\park[k+1]^-)}\,,\]
        then $\lambda_{\theta_i}^i(T\park[k+1]^-) > 0$.
        By definition of $T_{(k),\theta}^{i\star}$, \[T_{(k),\theta}^{i\star} < T\park[k+1] \text{ a.s.} \,.\]
        Furthermore, $\lambda_{\theta_i}^i(t) = 0$ for $t\in(T\park, T_{(k), \theta_i}^{i\star})$ and $\lambda_{\theta_i}^i(t) > 0$ for $t\in(T_{(k), \theta_i}^{i\star}, T\park[k+1])$.
        Then, as we assumed that $\lambda_{\theta_i}^i(t) = \lambda_{\theta'_i}^i(t)$ a.e., we can conclude that $T_{(k), \theta_i}^{i\star} = T_{(k), \theta'_i}^{i\star}$. 
        By differentiating the intensity functions on the interval $(T_{(k),\theta_i}^{i\star}, T\park[k+1])$ as in the proof of Lemma \ref{lemma:restart_times} we obtain: 
        
        \begin{align*}
            \quad&(\lambda_{\theta_i}^i)'(t) = (\lambda_{\theta'_i}^i)'(t) \text{ a.e.}\\ 
            \implies &-\beta_i(\lambda_{\theta_i}^i(t) - \mu_i) =  -\beta_i'(\lambda_{\theta'_i}^i(t) - \mu_i) \text{ a.e.} \\
            \implies &(\beta_i - \beta_i')(\lambda_{\theta_i}^i(t) - \mu_i) = 0  \text{ a.s.}\\\
            \implies &\int_{T_{(k), \theta}^{i\star}}^{T\park[k+1]}{(\beta_i - \beta_i')(\lambda_{\theta_i}^i(t) - \mu_i)\,\mathrm{d}t} = 0\\
            \implies &(\beta_i - \beta_i')\int_{T_{(k), \theta}^{i\star}}^{T\park[k+1]}{(\lambda_{\theta_i}^i(t) - \mu_i)\,\mathrm{d}t} = 0\,.
        \end{align*}
        
        Additionnally,  $\lvert \lambda_{\theta_i}^i(t) - \mu_i \rvert > 0$ a.s. for all $t\in(T_{(k),\theta_i}^{i\star}, T\park[k+1])$ as $\lambda^i_{\theta_i}$ is monotone and converges to $\mu_i$. Then it follows that the integral is non-zero and so $\beta_i = \beta_i'$.        
    
    \item Let us prove the equality $\alpha_{ij} = \alpha_{ij}'$. 
    By the assumption made on the event times, for any $i,j$ with $i\neq j$, there exists two event times $\tau < \tau_+$ with $\tau$ an event time from process $N_j$ and $\tau_+$ an event time from process $N_i$ such that:
    \begin{enumerate}
        \item $\lambda_{\theta_i}^i(\tau^-) > 0$;
        \item there are only events of process $N^j$ in the interval $[\tau, \tau_+)$.
    \end{enumerate} 
    
    Let $\tau$ and $\tau_+$ be two such event times. 
    As a reminder, $T\park[N(\tau) - 1]$ corresponds to the event time before $\tau$ and similarly for $\tau_+$. 
    For $t \in [T\park[N(\tau) - 1], \tau)$, as $\lambda_{\theta_i}^i(t) = \lambda_{\theta'_i}^i(t)$ a.e.\ and by Condition \ref{hyp:ii}:
    
    \begin{equation*}
    	\lambda_{\theta_i}^{i\star}(\tau^-) = \lambda_{\theta'_i}^{i\star}(\tau^-) \,.
    \end{equation*}

    By using the following equalities (proven in the previous points of the proof) 
    \begin{gather*}
    \mu_i = {\mu_i}'\,, \qquad \beta_i = \beta_i' \,,\\
    T_{(N(\tau) - 1), \theta_i}^{i\star} = T_{(N(\tau) - 1), \theta'_i}^{i\star}\,,
    \end{gather*} 
    it follows that 
    \begin{gather}
    \lambda_{\theta_i}^{i\star}(\tau^-) = \lambda_{\theta'_i}^{i\star}(\tau^-)\nonumber\\
    \implies \sum_{l=1}^{d}{\alpha_{il}\sum_{T_k^l < \tau}{\mathrm{e}^{-\beta_i(\tau - T_k^l)}}} =\sum_{l=1}^{d}{\alpha_{il}'\sum_{T_k^l < \tau}{\mathrm{e}^{-\beta_i(\tau - T_k^l)}}}\nonumber\\
    \implies \sum_{l=1}^{d}{(\alpha_{il} - \alpha_{il}')A^l}= 0\,,\label{eq:A^l}
    \end{gather}
    where \[A^l = \sum_{T_k^l < \tau}{\mathrm{e}^{-\beta_i(\tau - T_k^l)}}\,.\]
    
    We can then write Equation \eqref{eq:A^l} by replacing $\tau$ by $\tau_+$ as $\lambda_{\theta_i}^i(\tau_+^-) > 0$ because $\tau_+$ is an event time of process $N^i$.  We obtain then
    \begin{equation}\label{eq:B^l}
        \sum_{l=1}^{d}{(\alpha_{il} - \alpha_{il}')B^l} = 0\,,
    \end{equation}
    where \[B^l = \sum_{T_k^l < \tau_+}{\mathrm{e}^{-\beta_i(\tau_+ - T_k^l)}}\,.\]
    
   By definition of event times $\tau$ and $\tau_+$, all events on the interval $[\tau, \tau_+)$ are from process $N^j$ so we obtain for all $l \neq j$, \[B^l = A^l \mathrm{e}^{-\beta_i(\tau_+ - \tau)}\,.\] 
    For $l=j$, 
    \[B^j = A^j\mathrm{e}^{-\beta_i(\tau_+ - \tau)} + \sum_{\tau \leq T_k^j < \tau_+}{\mathrm{e}^{-\beta_i(\tau_+ - T_k^j)}}\,,\] 
    where the second term of the right hand side is positive as interval $[\tau, \tau_+)$ contains at least one event, $\tau$, from process $N^j$. 
    We can rewrite then Equation \eqref{eq:B^l}:
    \begin{equation*}
    \sum_{l=1}^{d}{(\alpha_{il} - \alpha_{il}')B^l} = 0
    \end{equation*}
    \begin{align*}
    \implies &\mathrm{e}^{-\beta_i(\tau_+ - \tau)}\sum_{l=1}^{d}{(\alpha_{il} - \alpha_{il}')A^l} \\
    &+ (\alpha_{ij}-\alpha_{ij}')\sum_{\tau \leq T_k^j < \tau_+}{\mathrm{e}^{-\beta_i(\tau_+ - T_k^j)}} = 0\,.
    \end{align*}
        
    By Equation \eqref{eq:A^l}, the first term is null and the second sum is non-zero as interval $[\tau, \tau_+)$ contains at least an event from process $N^j$. It follows that:
    
    \begin{gather*}
        (\alpha_{ij}-\alpha_{ij}')\sum_{\tau \leq T_k^j < \tau_+}{\mathrm{e}^{-\beta_i(\tau_+ - T_k^j)}} = 0\\
        \implies \quad (\alpha_{ij}-\alpha_{ij}') = 0\,.
    \end{gather*}
    
    It follows that for every $j\neq i$, $\alpha_{ij}=\alpha_{ij}'$. It remains to prove that $\alpha_{ii}=\alpha_{ii}'$. For this, let $T^i_2$ be the second event time of process $N^i$. We can the write the equality 
    
        \begin{equation*}
    	\lambda_{\theta_i}^{i\star}(T_2^i) - \lambda_{\theta'_i}^{i\star}(T^i_2) = 0 \,,
    \end{equation*}
    \begin{equation*}
    	 \implies \sum_{l=1}^{d}{(\alpha_{il} - \alpha'_{il}) \sum_{T_k^l < T^i_2}{\mathrm{e}^{-\beta_i(\tau - T_k^l)}}}= 0 \,,
    \end{equation*}
    \begin{equation*}
    	 \implies (\alpha_{ii} - \alpha'_{ii}) \sum_{T_k^i < T^i_2}{\mathrm{e}^{-\beta_i(\tau - T_k^l)}}= 0 \,,
    \end{equation*}
    as for $j\neq i$, $\alpha_{ij}=\alpha_{ij}'$. The sum is non-zero as it contains the event $T^i_1$ and so we obtain $\alpha_{ii}=\alpha_{ii}'$. 

\end{itemize}
This achieves the proof.
\end{proof}

\section{Proof of Corollary~\ref{cor:mle}} \label{appendix:proof_cor_mle}
    \begin{proof}
          For all \(i \in \{1, \dots, d\}\), \(\theta \in \Theta\) and \(k \in \mathbb N^*\),
          \begin{align*}
            \log{\lambda^i_{\theta_i}(T_k^{i-})}
            = &- \infty \II_{\lambda^i_{\theta_i}(T_k^{i-}) = 0} \\
            &+ \log{\lambda^i_{\theta_i}(T_k^{i-})} \II_{\lambda^i_{\theta_i}(T_k^{i-}) > 0} \\
            = &- \infty \II_{\lambda^{i\star}_{\theta_i}(T_k^{i-}) \le 0} \\
            &+ \log{\lambda^{i^\star}_{\theta_i}(T_k^{i-})} \II_{\lambda^{i\star}_{\theta_i}(T_k^{i-}) > 0} \\
            = &\log{\lambda^{i^\star}_{\theta_i}(T_k^{i-})} \\
            = &\log{\lim_{t \to T_k^{i-}} \lambda^{i^\star}_{\theta_i}(t)}.
          \end{align*}
          Now, for \(k=1\), \(\lambda^{i^\star}_{\theta_i}(T_k^{i-}) = \mu_i\),
          and for \(k \geq 2\), let us note \(q = N(T_k^i)-1\).
          Then,
          \[
            [T\park[q], T\park[q+1]) = [T\park[N(T_k^i)-1], T\park[N(T_k^i)]) = [S_k^i, T_k^i) \,,
          \]
          and by Lemma~\ref{lemma:intensity_interval_form}, \(\forall t \in [T\park[q], T\park[q+1])\):
          \begin{align*}
            \lambda_\theta^{i\star}(t)
            &= \mu_i + \left(\lambda_{\theta_i}^{i\star}(T\park[q]) - \mu_i\right)\mathrm{e}^{-\beta_i(t-T\park[q])}\\
            &= \mu_i + \left(\lambda_{\theta_i}^{i\star}(S_k^i) - \mu_i\right)\mathrm{e}^{-\beta_i(t-S_k^i)}\,.
          \end{align*}
          Thus, 
          \begin{align*}
            \lim_{t \to T_k^{i-}} \lambda^{i^\star}_{\theta_i}(t)
            &= \lim_{t \to T\park[q]^{-}} \lambda^{i^\star}_{\theta_i}(t)\\
            &= \mu_i + \left(\lambda_{\theta_i}^{i\star}(S_k^i) - \mu_i\right)\mathrm{e}^{-\beta_i(T_k^{i}-S_k^i)}\,.
          \end{align*}
          To conclude, by Equation~\eqref{eq:general_log_likelihood},
          \begin{align*}
            \ell_t^{i}(\theta_i)
            = &\sum_{k=1}^{N^i(t)}{\log{\lambda^i_{\theta_i}(T_k^{i-})}} - \Lambda_{\theta_i}^i(t) \\
            = &\log{\lambda^i_{\theta_i}(T_1^{i-})} + \sum_{k=2}^{N^i(t)}{\log{\lambda^i_{\theta_i}(T_k^{i-})}} - \Lambda_{\theta_i}^i(t) \\
            = &\log{\mu_i} + \sum_{k=2}^{N^i(t)}{\log \left(\mu_i + \left(\lambda_{\theta_i}^{i\star}(S_k^i) - \mu_i\right)\mathrm{e}^{-\beta_i(T_k^{i}-S_k^i)} \right)}\\
            &- \Lambda_{\theta_i}^i(t) \,.
          \end{align*}
    \end{proof}

\section{Algorithm for computing the log-likelihood}
\label{app:log-lik}

This section presents Algorithm \ref{alg:likelihood} for computing the log-likelihood $\ell_t(\theta)$ by leveraging the results from Corollary \ref{cor:mle}.
  \begin{algorithm}[!ht]
  \SetAlgoLined
   \textbf{Input} Parameters $\mu_i$, $\alpha_{ij}$, $\beta_i$ for $i,j\in\{1,\ldots, d\}$, list of event times and marks $(T\park, m_k)_{k=1:N(t)}$\;
   \textbf{\underline{Initialisation}} Initialise for all $i$, $\Lambda^i_k = \mu_i T\park[1]$, $\lambda^{i\star}(T\park^-) = \mu_i$, $\lambda_k^{i\star} = \mu_i + \alpha_{im_1}$ and $\ell_t(\theta) = \log(\lambda^{m_1\star}(T\park^-)) - \sum_{i=1}^{d}{\Lambda^i_k}$\;
   \For{$k = 2 \text{ to } N(t)$}{

   Compute for all $i$, $T\park[k-1]^{i\star} = \min{\adaptedpar{T\park[k-1] + \beta_i^{-1}\log{\adaptedpar{\frac{\mu_i - \lambda_k^{i\star}}{\mu_i}}}\II_{\{\lambda_k^{i\star} < 0\}}, T\park}}$\;

   Compute for all $i$, $\Lambda^i_k = \mu_i(T\park - T\park[k-1]^{i\star}) + \beta_i^{-1}(\lambda_k^{i\star} - \mu_i)(\mathrm{e}^{-\beta_i(T\park[k-1]^{i\star} - T\park[k-1])} - \mathrm{e}^{-\beta_i(T\park- T\park[k-1])})$\;

   Compute for all $i$, $\lambda^{i\star}(T\park^-) = \mu_i + (\lambda^{i\star}_k - \mu_i)\mathrm{e}^{-\beta_i(T\park - T\park[k-1])}$\;

   Update $\ell_t(\theta) = \ell_t(\theta) + \log(\lambda^{m_k\star}(T\park^-)) - \sum_{i=1}^{d}{\Lambda^i_k}$\;

   Compute for all $i$, $\lambda_k^{i\star} = \lambda^{i\star}(T\park^-) + \alpha_{im_k}$\;

   }
   Compute for all $i$, $T\park[N(t)]^{i\star} = \min{\adaptedpar{T\park[N(t)] + \beta_i^{-1}\log{\adaptedpar{\frac{\mu_i - \lambda_k^{i\star}}{\mu_i}}}\II_{\{\lambda_k^{i\star} < 0\}}, t}}$\;

   Compute for all $i$, $\Lambda^i_k = \left[\mu_i(t - T\park[N(t)]^{i\star}) + \beta_i^{-1}(\lambda_k^{i\star} - \mu_i)(\mathrm{e}^{-\beta_i(T_{N(t)}^{i\star} - T\park[N(t)])} - \mathrm{e}^{-\beta_i(t- T\park[N(t)])})\right]\II_{\{t > T\park[N(t)]^{i\star}\}}$\;

   Update $\ell_t(\theta) =  \ell_t(\theta) - \sum_{i=1}^{d}{\Lambda^i_k}$\;

   \Return Log-likelihood $\ell_t(\theta)$.
   \caption{Computation of the log-likelihood $\ell_t(\theta)$ of a multivariate exponential Hawkes process.}
   \label{alg:likelihood}
  \end{algorithm}

\section{Reconstructed interaction functions for synthetic data}
\label{app:numerical}

This section presents the reconstruction of interaction functions $h_{ij}$ along with the estimated functions $\tilde h_{ij}$ from the two-dimensional Hawkes processes simulations as described in Section \ref{sec:dim2}. Figure \ref{fig_supp:scenario_1} and Figure \ref{fig_supp:scenario_2} correspond respectively to the estimations for Scenarios (1) and (2) from Table \ref{tab:simulated_2_table}.

{\begin{figure}[!ht]
     \centering
     \includegraphics[width=\linewidth]{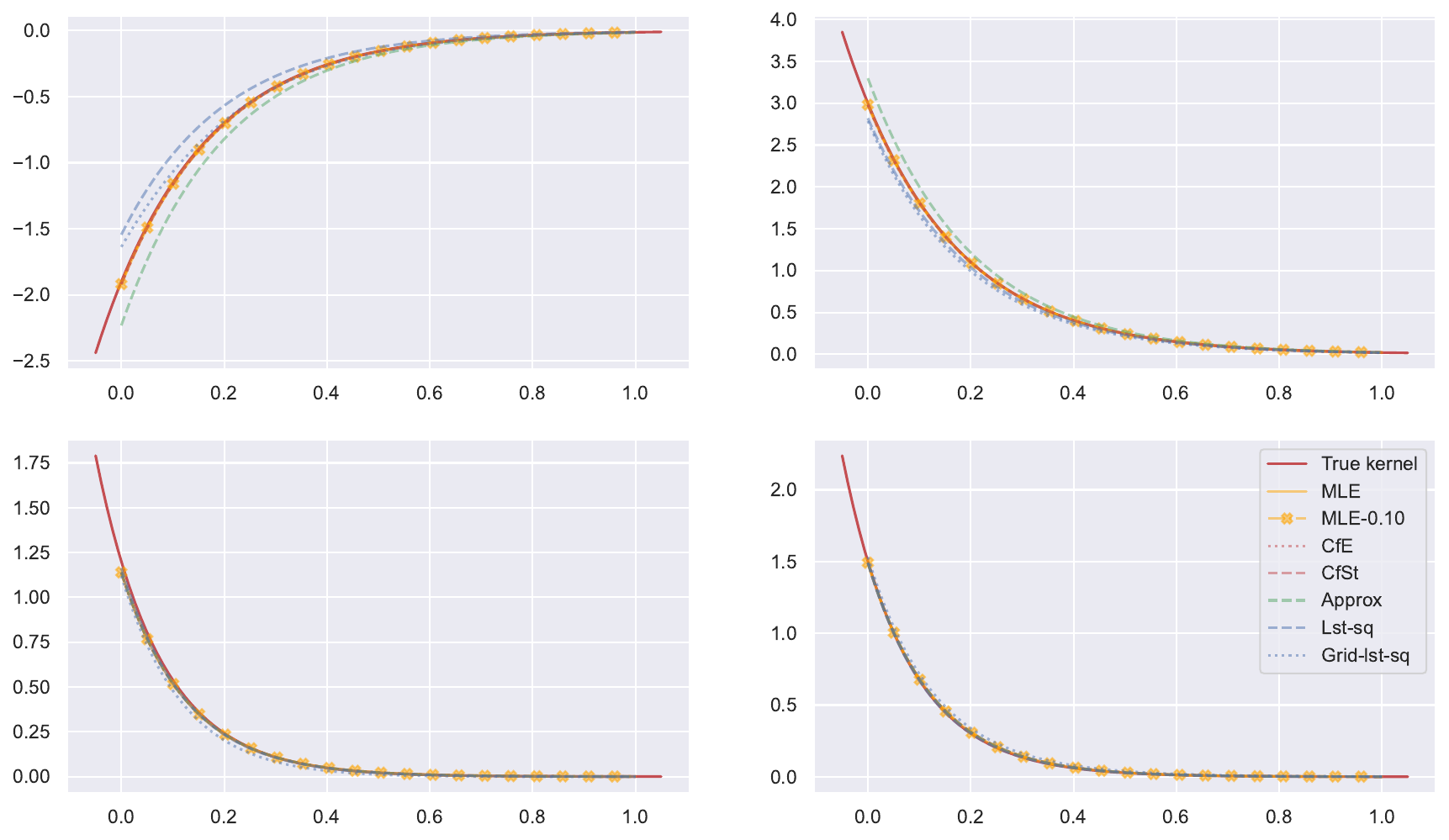}
     \caption{Reconstruction of interaction functions $h_{ij}$ for Scenario (1) of two-dimensional Hawkes processes along with all estimated functions $\tilde h_{ij}$. The real function is plotted in red and 25 estimations are averaged for each method.}
     \label{fig_supp:scenario_1}
     \end{figure}}
     
{\begin{figure}[!ht]
     \centering
     \includegraphics[width=\linewidth]{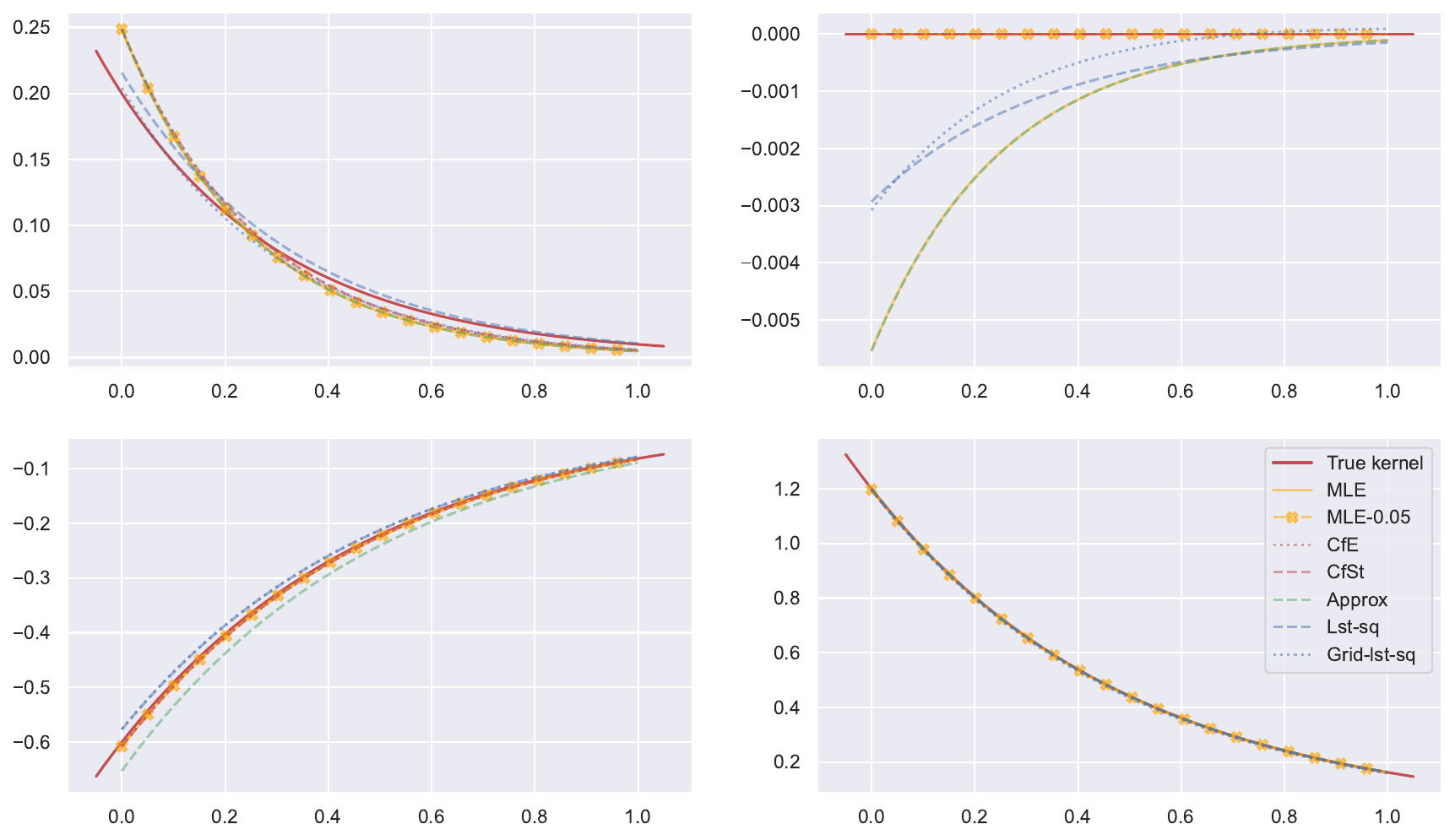}
     \caption{Reconstruction of interaction functions $h_{ij}$ for Scenario (2) of two-dimensional Hawkes processes along with all estimated functions $\tilde h_{ij}$. The real function is plotted in red and 25 estimations are averaged for each method.}
     \label{fig_supp:scenario_2}
     \end{figure}}

\end{document}